\newcommand\copyrighttext{%
  \footnotesize \textcopyright 2025 IEEE.  Personal use of this material is permitted.  Permission from IEEE must be obtained for all other uses, in any current or future media, including reprinting/republishing this material for advertising or promotional purposes, creating new collective works, for resale or redistribution to servers or lists, or reuse of any copyrighted component of this work in other works.
}
\newcommand\copyrightnotice{%
\begin{tikzpicture}[remember picture,overlay]
\node[anchor=south,yshift=10pt] at (current page.south) {\fbox{\parbox{\dimexpr\textwidth-\fboxsep-\fboxrule\relax}{\copyrighttext}}};
\end{tikzpicture}%
}
\newcommand{\nocompile}[1]{}
\newcommand{\mkregion}[1]{\langle#1\rangle}
\newcommand{\ptrdomain}[2]{{#2}^\mathcal{#1}}
\newcommand{\memread}[2]{\ast\mkregion{#1,#2}}
\newcommand{\memreadS}[2]{#1.\!\ast\!\!\mkregion{#2}}
\newcommand{\reg}[1]{\texttt{\textcolor{blue}{#1}}}
\newcommand{\mkinitial}[1]{#1\ensuremath{_{\text{\tiny{\fontfamily{zi4}\selectfont{0}}}}}}
\newcommand{\regz}[1]{\text{\mkinitial{\reg{#1}}}}
\newcommand{\hexa}[1]{\ensuremath{\mathtt{#1}}}
\newcommand{\htriple}[3]{\{~~#1~~\}\mbox{\hspace{2ex}}#2\mbox{\hspace{2ex}}\{~~#3~~\}}
\newcommand{\grayhline}{\arrayrulecolor{gray}\hline\arrayrulecolor{black}}
\newcommand{\abstraction}{\alpha_\mathbb{E}}
\newcommand{\tainted}{\top}
\newcommand{\mkabstract}[1]{\overline{#1}}
\definecolor{lightgreen}{RGB}{153,255,51}
\newcommand{\STAB}[1]{\begin{tabular}{@{}c@{}}#1\end{tabular}}
\DeclareMathOperator{\step}{step}
\def\hlinewd#1{%
\noalign{\ifnum0=`}\fi\hrule \@height #1 \futurelet
\reserved@a\@xhline}
\algrenewcommand\textproc{}
\newtheorem{example}{Example}
\newtheorem{definition}{Definition}
\newtheorem{theorem}{Theorem}
\setlist{leftmargin=*}
\lstdefinestyle{custom}{
    basicstyle=\ttfamily,       
    keywordstyle=\color{blue},  
    escapeinside={(*@}{@*)},    
    breaklines=true             
}
\renewcommand{\lstinline}[1]{%
  \text{\lstinline[style=custom]{#1}}%
}
\begin{document}
\title{Formally Verified Binary-level Pointer Analysis}

\author{
    \IEEEauthorblockN{Freek Verbeek\IEEEauthorrefmark{1}\IEEEauthorrefmark{2}, Ali Shokri\IEEEauthorrefmark{2}, Daniel Engel\IEEEauthorrefmark{1}, Binoy Ravindran\IEEEauthorrefmark{2}}
    \IEEEauthorblockA{\IEEEauthorrefmark{1}Open Universiteit, Heerlen, The Netherlands\\
    Email: \{freek.verbeek, daniel.engel\}@ou.nl}
    \IEEEauthorblockA{\IEEEauthorrefmark{2}Virginia Tech, Blacksburg, VA, USA\\
    Email: \{freek, ashokri, binoy\}@vt.edu}
}

\maketitle
\copyrightnotice

\begin{abstract}
Binary-level pointer analysis can be of use in symbolic execution, testing, verification, and decompilation of software binaries.
In various such contexts, it is crucial that the result is trustworthy, i.e., it can be formally established that the pointer designations are overapproximative.
This paper presents an approach to formally proven correct binary-level pointer analysis.
A salient property of our approach is that it first generically considers what proof obligations a generic abstract domain for pointer analysis must satisfy.
This allows easy instantiation of different domains, varying in precision, while preserving the correctness of the analysis.
In the trade-off between scalability and precision, such customization allows ``meaningful'' precision (sufficiently precise to ensure basic sanity properties, such as that relevant parts of the stack frame are not overwritten during function execution) while also allowing coarse analysis when pointer computations have become too obfuscated during compilation for sound and accurate bounds analysis.
We experiment with three different abstract domains with high, medium, and low precision.
Evaluation shows that our approach is able to derive designations for memory writes soundly in COTS binaries, in a context-sensitive interprocedural fashion.
\end{abstract}

\begin{IEEEkeywords}
binary analysis, pointer analysis, formal methods
\end{IEEEkeywords}

\section{Introduction}\label{sec:intro}

Pointer analysis is central to various forms of verification and analysis for software containing pointers, facilitating the construction of a state-based semantic model of software~\cite{choi1993efficient,landi1993interprocedural,andersen1994program,emami1994context,ramalingam94,steensgaard1996points}.
It aims to statically resolve, for any pointer in a given program, which region of the memory it may point to.
Specifically, given any two pointers, it must be known whether they are  \textit{aliasing}, always referring to \textit{separate} regions in memory, or  if they may possibly overlap.
If a value is written to memory, and no pointer information is known, then one cannot accurately describe what the next state will be.
This can lead to overapproximative \emph{thrashing} parts of the memory state or \emph{forking}, i.e., to conservatively considering both separation and aliasing possibilities.
Both such cases are undesirable as they will quickly lead to unrealistic states and path explosions.
In other words, pointer analysis is a necessity for building a state-based transition system that accurately models the software under investigation.
Such a model, then, typically precedes a verification or analysis effort aimed at higher-level properties.

The necessity for pointer analysis immensely exacerbates when dealing with binaries (i.e., machine or assembly code) instead of source code.
The reason is that at this level of abstraction \emph{everything is a pointer}.
There are no variables, and memory can be considered as a flat unstructured address space.
In a typical x86-64 program, about 28\% of all assembly instructions write to memory\footnote{Measured over several CoreUtils binaries and  Firefox libraries with different levels of optimization.}, producing tens of thousands of pointers even in medium-sized programs.
Moreover, control-flow related information is stored in writable memory, such as the current return address and the currently caught exception stack.
Theoretically, if the destination of even a single memory write cannot be resolved, then the effect of executing that memory write could result in either thrashing all memory (including such control-flow pertinent information) or forking into unrealistic states, such as when a memory write overlaps with the return address, even though in reality it did not.
This may lead to a situation where it cannot even be established what instruction is to be executed, let alone what an accurate next state can be.
This is one of the key challenges in dealing with binaries, preventing one from simply using techniques developed for source code analysis and applying them to low-level code found in binary executables~\cite{debray1998alias}.

This paper presents an approach to formally verified binary-level pointer analysis.
Typically, such analyses are based on a form of abstract interpretation~\cite{cousot1977abstract}, where an abstract domain is defined that overapproximates concrete semantics~\cite{balakrishnan2004analyzing,navas2012signedness}.
A fundamental challenge is choosing the ``right'' abstract domain, as this essentially boils down to balancing precision vs. scalability.
This paper thus first leaves the abstract domain \emph{polymorphic} and formulates a set of eight generic to-be-refined functions, as well as the proof obligations that these functions must satisfy.
Over these generic functions, an executable algorithm for pointer analysis is formalized and proven correct.
An \emph{instantiation} thus defines an abstract domain and an implementation of the generic functions.
Any instantiation that satisfies the proof obligations automatically constitutes a formally proven correct binary-level pointer analysis. As will be discussed later, all the formalism and proofs are carried out in Isabelle/HOL and have been shared with the readers.

We then provide three different instantiations of our generic functions, each of which strikes a different balance in the trade-off between precision vs. scalability.
First, \emph{pointer computations} form an abstract domain that keeps track of how pointers were computed: highly precise, but in practice one must cap the domain to a given size and produce top ($\top$) when that cap is exceeded.
Second, \emph{pointer bases} form an abstract domain where each pointer is represented only by its pointer base (e.g., a stack pointer, or the return value of a \texttt{malloc}): more coarse as it cannot be used for alias analysis, but still allows accurate separation analysis (for binaries compiled from source code, pointers based in different blocks can be assumed to be separate, even if an out-of-bounds occurs~\cite{leroy2008formal}).
Third, \emph{pointer sources} are an abstract domain where a pointer is modeled by the set of sources (e.g., user inputs, initial parameter values) used in its computation.
This is highly coarse, but is scalable and still allows a form of separation reasoning.

The pointer analysis presented in this paper is context-sensitive and compositional.
Context-sensitivity is desirable, since pointers are passed through from function to function.
We derive \emph{function preconditions}, which states that invariably a certain function is always called within in a context where, e.g., register \reg{rdi} contains a heap-pointer, or register \reg{rsi} contains a pointer to within the stack frame of the caller.
Per function, we can then derive a \emph{function postcondition} that summarizes which regions were written to/read from by the function, and if after termination abstract pointers are left in return registers or global variables.
Compositionally, these summaries can then be used for pointer analysis for callers of summarized functions.
Due to space limitation, we focus our presentation on intraprocedural analysis and do not expand on the above technique for composition.

Bottom-up pointer analysis (i.e., binary analysis, in contrast to top-down source code analysis) can be useful in various use cases (see Section~\ref{sec:usecases}):
\begin{itemize}
\item It can be integrated into a disassembly algorithm~\cite{pang2021sok}.
A large facet of disassembly is assessing which instruction addresses are reachable (i.e., control flow recovery).
A key challenge is resolving indirections, i.e., dynamically computed control flow transfers.
Context-sensitive pointer analysis can assist, by providing information on which pointers are passed to a function.
\item It can be a preliminary step to a decompilation effort~\cite{cifuentes1995decompilation}.
Specifically, one of the steps in decompilation is to recover variables. Bottom-up pointer analysis provides information on which memory writes actually constitute variables.
\item It can be the base of a bottom-up dataflow analysis. The function summaries already provide a form of dataflow analysis, by providing information on in- and output relations. They can be used to verify whether functions adhere to a calling convention and to see which state parts are overwritten or preserved by a given function. We also demonstrate by example that an overapproximative pointer analysis can be used for live variable analysis.
\end{itemize}

\deleted{
We emphasize the need for customizable binary-level pointer analysis in favor of traditional means of pointer analysis, in a bottom-up setting. 
First, stripped assembly code contains no variables, no types, and no (de)reference operators. 
Traditional means of pointer analysis typically implicitly assume that their target language has such code constructs.
Second, pointer computations are hard to ``recover'' from assembly to an analyzable form, as they may be implemented by low-level implementations not recognizable as arithmetic.
Traditional methods of pointer analysis often require overapproximative bounds on memory accesses, implicitly assuming that their target language performs arithmetic pointer computations through recognizable arithmetic operators. 
As an example, consider the following assembly: \mintinline{nasm}{mov edx,0x1999999A;}\mintinline{nasm}{imul eax}.
These two instructions will write the value $\reg{eax}/10$ (32-bit integer division by ten) into the register \reg{edx} (even though the latter instruction performs 64-bit multiplication).
Even if an overapproximative bound on the \emph{value} of the pointer cannot be established, it can still be the case, e.g., that the \emph{base} of the pointer can overapproximatively be established.
In other words, binary-level pointer analysis must be able to deal with scenarios where coarse abstraction is needed.
Still, in various cases, precision is desirable: e.g., when analyzing overlap between local variables and the top of the stack frame where the return address is stored.
}

We \deleted{also} emphasize the need for formally proven correct binary-level pointer analysis.
Symbolic treatment of pointers and memory is notoriously difficult.
Existing approaches typically make various assumptions \emph{implicitly}, e.g., they may implicitly assume a return address cannot be overwritten, assume separation between pointers based on heuristics or best practices, or assume alignment of regions.
This paper aims to reduce the trusted code base by explicitizing such assumptions and either proving them through invariants or reporting them explicitly otherwise.
The trusted code base is thus reduced to the validity of explicit and configurable assumptions such as ``regions based on stack pointers of different functions are assumed to be separate''.

\textbf{Limitations, assumptions and scope.} A major assumption behind our approach is the treatment of \emph{partially overlapping} memory accesses.
Memory accesses (reads or writes) to partially overlapping regions may happen, but we assume that they do not concern pointers.
More details can be found in Section~\ref{sec:csemantics}.
Moreover, our approach has been implemented for the x86-64 architecture and does not deal with concurrency.
A fundamental limitation is that not all indirections may be resolved, which may lead to unexplored paths.

The approach has been formally proven correct in Isabelle/HOL~\cite{nipkow2002isabelle,dawson2009isabelle}, and has been mirrored in Haskell for experimental results.
These confirm soundness relative to a ground truth obtained by observing executions.
Moreover, they show precision comparable to or improved upon the state-of-the-art.
We evaluate the effect of interprocedural bottom-up pointer analysis with respect to resolving indirections, identifying 135 cases where context-sensitive information allowed resolving a function callback.
Finally, for all analyzed functions it has been verified whether the result is sufficiently precise to show that the return address has not been overwritten and that critical parts of the stack frame (e.g., storing non-volatile register values) are unmodified during execution of the function.
This was successful for 99.6\% of all analyzed functions.


In summary, we contribute:
\begin{itemize}[noitemsep,topsep=0pt]
\item A \emph{formally proven correct} approach to binary-level pointer analysis that leaves the abstract domain \emph{generic}, allowing easy development of instantiations with different characteristics (e.g., different levels of preciseness);
\item An evaluation over roughly 1.4 million assembly instructions, showing scalability and applicability of the approach.
\end{itemize}

Section~\ref{sec:related-work} studies the related work. Sections~\ref{sec:overview} and \ref{sec:definitions} provide details of our generic functions and their three different levels of instantiation, respectively. Section~\ref{sec:intra}, demonstrates the realization of interprocedural pointer analysis through our general functions. While Section~\ref{sec:usecases} looks at several use cases of the introduced approach, Section~\ref{sec:evaluation} relates the experimental results to those produced by the state-of-the-art tools currently available. Section~\ref{sec:conclusion} concludes the paper.

\section{Related Work}\label{sec:related-work}

Source-level pointer analysis has been an active research field for decades~\cite{hind2001pointer}.
Its typical use cases lie in \emph{top-down} contexts: it takes as input source code, and provides information to the compiler for doing optimizations and data flow analyses.
We here do not aim to provide an overview of this field, since our work focuses on \emph{bottom-up} contexts: taking as input a binary, the result of pointer analysis provides information usable for decompilation and verification.

Generally, source-level approaches to program analysis cannot directly be applied to binary-level programs~\cite{balakrishnan2010wysinwyx}.
Various research therefore focuses on symbolic execution and/or abstract interpretation specifically tailored to the binary level.
We distinguish \emph{under}approximative techniques from \emph{over}approximative ones.
In this discussion, we specifically focus on how these techniques deal with pointers.

\nocompile{
\begin{table}
\centering
\begin{tabular}{llll}
\hlinewd{1pt}
\textbf{Name} & \textbf{Appr. } & \textbf{Pointer technique} & \textbf{References} \\
\hline
\textsc{SAGE}                & Under   & Concretization        & \cite{GodefroidLM08}\\
\textsc{S$^2$E}              & Under   & Concretization        & \cite{chipounov2012}\\
\textsc{Fuzzball}            & Under   & Concretization        & \cite{martignoni2012}\\
\textsc{Binsec}              & Under   & Forking               & \cite{djoudi2015}\\
\grayhline
\textsc{Jakstab}             & Over    & Thrashing                  & \cite{kinder12}\\
\textsc{Hoare Graphs}        & Over    & Forking                    & \cite{verbeek2022formally}\\
\textsc{BinTrimmer}          & Over    & Value Set Analysis         & \cite{redini2019b}\\\grayhline
\textsc{Codesurfer/x86}      & Over    & Value Set Analysis         & \cite{balakrishnan2005codesurfer}\\
\textsc{BAP}                 & Over    & Value Set Analysis         & \cite{brumley2011}\\
\textsc{angr}                & Over    & Value Set Analysis         & \cite{shoshitaishvili16}\\
\textsc{This paper}   & Over    & Value Set Analysis       & This paper \\
\hlinewd{1pt}
\end{tabular}
\caption{Binary-level analysis tools} 
\label{tab:related}
\end{table}
}

\subsubsection*{Underapproximative approaches}\hspace{1ex}\\
\textsc{SAGE} combines symbolic execution of assembly code with fuzz testing, allowing exposure of real-life vulnerabilities in real-life software~\cite{GodefroidLM08}.
Initially, \textsc{SAGE} did ``not reason about symbolic pointer dereferences'', but it has been combined with Yogi allowing runtime behavior observed from test cases to be used in refining abstractions to find whether pointers may be possibly aliasing~\cite{godefroid2008automating}.
\textsc{S$^2$E} is a platform for traversing binaries, allowing exploration of hundreds of thousands of paths using selective symbolic execution~\cite{chipounov2012,chipounov2009}.
\textsc{S$^2$E} provides an approach where dereferencing a symbolic pointer provides next states with possible concrete values based on the symbolic pointer and the current path constraints.
A similar approach is taken by \textsc{FuzzBall}, a symbolic execution framework for binaries, mainly concerned with improving the path coverage of binary fuzzers~\cite{Babi2011,martignoni2012}.
It explores individual paths one by one and chooses concrete values for offsets in pointers.
\textsc{Binsec} is a code analysis tool with a focus on the security properties of binaries~\cite{djoudi2015,daniel2020binsec}.
It has been applied to find use-after-free bugs~\cite{feist2016} and for reachability analyses~\cite{girol2021}.
Symbolic execution is based on forking, using an SMT solver to prune infeasible paths.
\textsc{Binsec} is based on \emph{bounded} verification, making it underapproximative.
Kapus et al. provide an interesting approach by concretizing and segmenting the memory model so that symbolic pointers can only refer to single memory segments~\cite{kapus2019}.
Use of a test harness ensures both termination and that allocations always have a concrete size.

These methods underapproximate either by not exploring all paths, or by underapproximating pointer values.
Underapproximation typically is very well suited for finding bugs and vulnerabilities in software: it leads to few false negatives and provides excellent scalability.
It is suitable in the context of testing and binary exploration.
In contrast, our pointer analysis is overapproximative, quantifying over all execution paths and all values.
This makes it suitable in the contexts of verification and lifting to higher-level representations.

\subsubsection*{Overapproximative approaches}\hspace{1ex}\\
\textbf{Control Flow Reconstruction.} Overapproximative approaches to binary analysis are generally based on a form of abstract interpretation.
Many approaches are aimed at \emph{control flow reconstruction} and resolving of indirect branches.
By having abstract values represent a set of possible jump targets, an indirection can be resolved by concretizing the abstract value to all instruction addresses it represents.
\textsc{Jakstab} performs binary analysis and control flow reconstruction based on this principle~\cite{kinder2010static,kinder12}. 
The user manually provides a harness modeling the initial state, and relative to that initial state, the generated control flow graph overapproximates all paths in the binary.
Pointer aliasing is dealt with by thrashing the symbolic state~\cite{kinder2008jakstab}. 
Verbeek et al. present an overapproximative approach to control flow reconstruction based on forking the state non-deterministically~\cite{verbeek2022formally}.
Their output can be exported to the Isabelle/HOL theorem prover where it can be formally proven correct.
An important use case for overapproximative control flow reconstruction is \emph{trimming} a binary by removing provably unreachable code.
\textsc{BinTrimmer} uses abstract interpretation to prove the reachability of dead code and trim the binary accordingly.

Using abstract interpretation for the purpose of control flow reconstruction (i.e., resolving of indirections) is different from using abstract interpretation for resolving designations of memory writes (i.e., pointer analysis).
It can be seen as a specific form of the more generic technique \emph{value set analysis} (VSA).


\textbf{Value Set Analysis.}
Various approaches use abstract interpretation to do VSA: mapping state parts to abstract representations of a set of values that the state part may hold at a certain program point.
\textsc{Codesurfer/x86} utilizes as abstract domain a tuple storing a base and an offset~\cite{balakrishnan2004analyzing,reps2008improved}.
The offset is modeled by an abstract domain that combines intervals and congruences.
Frameworks such as \textsc{BAP}~\cite{brumley2011} and \textsc{angr}~\cite{shoshitaishvili16} provide implementations of binary-level VSA.
Abstract domains are typically a form of signed-agnostic intervals~\cite{navas2012signedness}.
To our knowledge, BinPointer~\cite{kim2022binpointer} is the work closest related to the contribution in this paper.
Kim et al. provide binary-level context-insensitive pointer analysis in a sound and overapproximative fashion, while also targeting scalability and evaluating preciseness of their produced output.
Their abstract domain is conceptually equivalent to the ``pointer bases'' domain presented in this paper, which is also similar to the abstract domain found in~\cite{balakrishnan2004analyzing}.
BinPointer reaches the conclusion of 100\% soundness by running test-cases, while we reach 100\% soundness through formal proofs.


\textbf{Summary of relation to overapproximative approaches.}
Existing approaches use abstract interpretation to reconstruct control flow, or to do VSA.
It is a well-known issue that in realistic, optimized and stripped COTS binaries, computations quickly become too complicated and obfuscated to be amenable for VSA sufficiently precise to enable reasoning over separation~\cite{zhang2019bda}.
This paper presents the first approach to VSA that is formally proven correct, that is generic wrt. the abstract domain and that therefore allows different domains with different levels of preciseness to be used.
To the best of our knowledge, there exists no approach that can overapproximatively assign pointer designations to virtually all memory writes in large COTS binaries, in a context-sensitive interprocedural fashion.
In Section~\ref{sec:evaluation} we aim to provide a more technical head-to-head comparison with existing tools.


\section{Overview of Generic Constituents}\label{sec:overview}

There are three major constituents required to formulate the correctness theorem.
First of all, a \emph{concrete semantics} that provides a step function $\step$ over concrete states (denoted by $s$, $s'$, $\ldots$).
Second, an \emph{abstract semantics}, defined by 1.) an abstract step function $\mkabstract{\mbox{step}}$ over abstract states (denoted by $\sigma$, $\sigma'$, $\ldots$), and 2.) a \emph{join} (denoted ${\sqcup}$).
Third, a concretization function $\gamma_\mathbb{S}$ that maps abstract states to sets of concrete states.

We prove the following two theorems:
\begin{theorem}
Function $\gamma_\mathbb{S}$ is a \emph{simulation relation}~\cite{baier2008principles} between the concrete and abstract semantics:
\[
	s \in \gamma_\mathbb{S}(\sigma) \implies \step(s) \in \gamma_\mathbb{S}(\mkabstract{\mbox{step}}(\sigma))
\]
\end{theorem}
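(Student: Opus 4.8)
The plan is to establish the simulation property by case analysis on the single concrete step, reducing it to the local soundness obligations imposed on the eight generic functions. Fix an abstract state $\sigma$ and a concrete state $s \in \gamma_\mathbb{S}(\sigma)$. Because $\gamma_\mathbb{S}$ is defined componentwise, $s$ agrees with a concretization of $\sigma$ on the registers (including \reg{rip}), the flags, and every memory region tracked in $\sigma$. In particular the concrete and abstract semantics fetch the same instruction $\iota$ — for an indirect control transfer this is precisely the obligation that the concrete value in \reg{rip} lies in the concretization of its abstract counterpart, so that the set of instructions considered by $\mkabstract{\mbox{step}}$ covers the one actually taken. It then suffices to decompose $\iota$ into (i) reads of operands from registers and memory, (ii) an arithmetic/logical computation over the read values, and (iii) writes of the results back to registers and memory together with the update of \reg{rip}, and to show each phase is overapproximated.

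For phases (i) and (ii) I would chain the soundness obligations of the generic \emph{read} and \emph{abstract-evaluation} functions: reading a region $\mkregion{r}$ yields an abstract value whose concretization contains the concrete value read whenever the inputs were correctly abstracted (when several tracked regions could be hit, the abstract read joins their contents and soundness of $\sqcup$ closes the gap), and evaluating an instruction's expression over abstract operands yields an abstract result whose concretization contains the concrete result. Composing these, after phase (ii) the freshly computed concrete value lies in the concretization of the abstract value that $\mkabstract{\mbox{step}}$ produces for the corresponding destination.

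The crux is phase (iii), the memory write. A concrete write of value $v$ at concrete address $a$ leaves a region $R$ untouched exactly when $a \notin R$; the abstract step must therefore classify every region tracked in $\sigma$ as separate from, equal to, or possibly-overlapping with the abstract destination pointer, using the generic separation/alias oracle, and react accordingly. Soundness has two faces here. Whenever the oracle reports \emph{separation}, the governing obligation must guarantee that the concretizations of the two abstract pointers are disjoint, so that $\mkabstract{\mbox{step}}$ is justified in leaving that region's abstract contents unchanged; whenever it reports \emph{possible overlap}, $\mkabstract{\mbox{step}}$ weakens that region's contents to $\top$, which concretizes to everything and is hence trivially an overapproximation; and in the \emph{equal} case the region's abstract contents are joined with the abstraction $\abstraction(v)$ of the written value. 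The standing assumption that partially overlapping pointer accesses do not occur is what makes this trichotomy exhaustive and lets a matched region be updated in place rather than split into sub-regions; it must be carried as a hypothesis of the concrete semantics throughout. The register writes and the \reg{rip} update are the same argument specialized to singleton regions.

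Assembling the per-component facts: $\step(s)$ agrees, componentwise, with an element of $\gamma_\mathbb{S}(\mkabstract{\mbox{step}}(\sigma))$, and by the componentwise definition of $\gamma_\mathbb{S}$ this gives $\step(s) \in \gamma_\mathbb{S}(\mkabstract{\mbox{step}}(\sigma))$. The main obstacle I expect is phase (iii): stating the separation oracle's soundness obligation so that it is strong enough to justify in-place region updates yet weak enough to remain provable for the coarse ``pointer sources'' instantiation, and pinning down exactly where the no-partial-overlap assumption is consumed — that is where most of the genuine work sits, the rest being bookkeeping over the instruction set.
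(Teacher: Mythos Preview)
Your decomposition into read/compute/write phases, each discharged by the corresponding generic proof obligation, is exactly the paper's strategy; the paper likewise reduces the simulation to overapproximation of $\mathsf{read}$, of the abstract operator $\mkabstract{S}$, and of $\mathsf{write}$, with the region invariant (all tracked regions pairwise separate) maintained by $\mathsf{write}$. Two small corrections to your phase~(iii): on \emph{possible overlap} the paper does not weaken to $\top$ but merges the overlapping regions and stores the join of their values with the written value, and on \emph{exact alias} it performs a strong update (overwrite, not join); also, partial overlap is handled not by making the trichotomy exhaustive but by having the concrete semantics taint such accesses, so the simulation is stated modulo $\tainted$ (cf.\ the $\cup\{\tainted\}$ in the semantics proof obligation).
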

This theorem shows that the abstract semantics overapproximate the concrete ones.

Second, we define an algorithm that performs symbolic execution, while maintaining a mapping $\phi$ from the visited instruction addresses to the abstract states.
Whenever an instruction address is visited twice, the current abstract state is joined with the abstract state stored during the previous visit, and the algorithm proceeds if the joined state is unequal to the stored one.
In essence, this is a fixed-point computation.
We prove:

\begin{theorem}
The mapping $\phi$ produced by the algorithm provides \emph{invariants}:
\[
    s~\mbox{is reachable} \implies s \in \gamma_\mathbb{S}(\phi(s.\reg{rip}))
\]
\end{theorem}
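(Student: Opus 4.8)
The plan is to prove the statement by induction on the length of a concrete execution path $s_0 \to s_1 \to \cdots \to s_n = s$ witnessing that $s$ is reachable, using Theorem~1 to carry the simulation relation across a single $\step$ and a fixed-point property of the terminated algorithm to transport it from the successor's abstract state computed by $\mkabstract{\mbox{step}}$ to the abstract state actually stored in $\phi$. For the base case ($n = 0$), $s$ is an initial state, and I would appeal to the initialization of the algorithm: it seeds $\phi$ at the entry instruction address with an abstract state $\sigma_0$ chosen so that every initial concrete state lies in $\gamma_\mathbb{S}(\sigma_0)$; since entries of $\phi$ only ever grow under $\sqcup$, the final $\phi(s.\reg{rip})$ subsumes $\sigma_0$, and combined with monotonicity of $\gamma_\mathbb{S}$ (below) this gives $s \in \gamma_\mathbb{S}(\phi(s.\reg{rip}))$.

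For the inductive step, assume $s_{n-1} \in \gamma_\mathbb{S}(\phi(s_{n-1}.\reg{rip}))$. Theorem~1 yields $s_n = \step(s_{n-1}) \in \gamma_\mathbb{S}(\mkabstract{\mbox{step}}(\phi(s_{n-1}.\reg{rip})))$, so it suffices to show $\gamma_\mathbb{S}(\mkabstract{\mbox{step}}(\phi(s_{n-1}.\reg{rip}))) \subseteq \gamma_\mathbb{S}(\phi(s_n.\reg{rip}))$. I would obtain this from two lemmas: (i) $\gamma_\mathbb{S}$ is monotone with respect to the order $\sqsubseteq$ induced by $\sqcup$, i.e.\ $\sigma \sqsubseteq \sigma'$ implies $\gamma_\mathbb{S}(\sigma) \subseteq \gamma_\mathbb{S}(\sigma')$, which follows from the generic proof obligations relating $\gamma_\mathbb{S}$ and $\sqcup$; and (ii) the post-fixed-point property of the terminated algorithm: whenever $\phi$ is defined at an address $a$, the abstract successor(s) of $\phi(a)$ under $\mkabstract{\mbox{step}}$, located at address $a'$, satisfy $\mkabstract{\mbox{step}}(\phi(a)) \sqsubseteq \phi(a')$. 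Instantiating (ii) at $a = s_{n-1}.\reg{rip}$ and $a' = s_n.\reg{rip}$ and chaining through (i) closes the induction. A small extra obligation, discharged by the same reasoning that yields (ii), is that $\phi$ is defined at $s.\reg{rip}$ for every reachable $s$: the entry address is in $\mathrm{dom}(\phi)$, and $\mathrm{dom}(\phi)$ is closed under taking abstract successors.

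I expect the main obstacle to be establishing (ii): bridging the gap between the operational, worklist-style behavior of the algorithm and the declarative post-fixed-point statement. I would handle this with a loop invariant of the algorithm along the lines of ``every instruction address not currently on the worklist has all of its $\mkabstract{\mbox{step}}$-successors subsumed by the current $\phi$''. This invariant is re-established at each iteration because processing an address $a$ joins $\mkabstract{\mbox{step}}(\phi(a))$ into each successor's entry and re-enqueues any successor whose stored state strictly increased; when the loop exits, the worklist is empty, so the invariant specializes to exactly~(ii). Two points require care in the formalization: $\mkabstract{\mbox{step}}$ may produce several successors at different addresses (conditional branches, and indirections concretized to a finite target set), so (ii) and the loop invariant must quantify over all of them; and termination of the algorithm itself --- needed for the word ``terminated'' --- is a separate concern relying on the abstract domain admitting no infinite ascending chains, which in the concrete instantiations is enforced by capping oversized abstractions to $\top$.
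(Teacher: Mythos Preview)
Your proposal is correct and follows the standard abstract-interpretation route: induction on the length of the concrete execution, Theorem~1 for the single-step simulation, monotonicity of $\gamma_\mathbb{S}$ with respect to the $\sqcup$-induced order, and a post-fixed-point property of the terminated worklist established via a loop invariant. The paper itself does not present a written proof of this theorem in the text; it simply records that the result ``has been formally proven correct in Isabelle/HOL'' and spends its effort instead on spelling out the proof obligations (overapproximativity of join, of semantics, of separation, etc.) that any instantiation of the generic domain must satisfy. Your decomposition into lemmas (i) and (ii) is exactly what such a mechanized proof unfolds into, so there is nothing substantive to contrast beyond agreement in spirit.

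One small remark that may save you effort on the closure obligation: the paper explicitly restricts reachability to states reachable ``through a path of \emph{resolved} control flow transfers.'' Hence when you argue that $\mathrm{dom}(\phi)$ is closed under successors, you only need closure under \emph{resolved} abstract successors, which is precisely what the algorithm guarantees by construction; unresolved indirections are outside the scope of the theorem by definition, so the concern you flag at the end of your proposal is already discharged by the statement's hypothesis.
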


In words, any concrete reachable state $s$ is included in the set of states represented by the abstract state stored in mapping $\phi$ associated to its instruction pointer $\reg{rip}$.
Reachability means that $s$ is reachable from some unconstrained concrete state, with \reg{rip} as the entry point, through a path of \emph{resolved} control flow transfers.

\subsection{Concrete Semantics}\label{sec:csemantics}

The concrete semantics are largely straightforward, except for the treatment of partially overlapping memory accesses.
At the binary level, any memory access occurs through pointer computations.
We formulate the assumption that \emph{any region in memory storing a pointer is from there on not accessed in a partially overlapping fashion}.
For example, consider a scenario where region $\mkregion{\regz{rsp} - 16,8}$ has been accessed (denoting the 8-byte region 16 bytes below the original value of the stackpointer $\reg{rsp}$).
From that point on, regions $\mkregion{\regz{rsp} - 16,4}$ and $\mkregion{\regz{rsp} - 12,4}$ are still considered valid accesses.
Region $\mkregion{\regz{rsp} - 12,8}$ is not, as it partially overlaps with a previously accessed region.
As a consequence of this assumption, any partially overlapping access is assumed not to produce a pointer.
Since we are interested in pointer analysis, we therefore allow the concrete semantics to make values read from or written to by partially overlapping access to become \emph{tainted}.
Overapproximation then concerns untainted values \emph{only}.

The concrete states stores \emph{concrete values}, denoted $\mathbb{V}$.
A concrete value is either an immediate bitvector or the special value $\tainted$ (tainted).
A concrete state contains an assignment of registers to concrete values.
Concrete memory assigns concrete values to concrete regions: tuples of type $\mathbb{V} \times \mathbb{V}$ containing the address and size of the region.
A priori, no memory alignment information is available and thus concrete memory must store concrete values as well as the current alignment.
Every read/write updates the alignment information and taints values in memory accordingly.
The result is a type $\mathbb{S}$ modeling \emph{concrete states} and function $\step$: a formal but fully executable semantics, in which small assembly programs can be executed on concrete initial states.

\subsection{Abstract Semantics}

The abstract state stores \emph{abstract values}.
The datatype for abstract values, denoted $\mkabstract{\mathbb{V}}$, is left completely polymorphic. 
We assume existence of a special $\mkabstract{\top}$ element.
Over this datatype, the following generic (i.e., \emph{undefined}) functions are assumed to be available:
\[
\begin{array}{cll}
\gamma_\mathbb{V}  & \mbox{Concretization} & \mkabstract{\mathbb{V}} \mapsto \{\mathbb{V}\}\
\\
\mkabstract{S} & \mbox{Abstract semantics} & \mbox{Operation} \times [\mkabstract{\mathbb{V}}] \mapsto  \mkabstract{\mathbb{V}}
\\
\sqcup  & \mbox{Join} & \mkabstract{\mathbb{V}} \times \mkabstract{\mathbb{V}} \mapsto \mkabstract{\mathbb{V}}
\\
\mkabstract{\bowtie} & \mbox{Separation} & \mkregion{\mkabstract{\mathbb{V}} \times \mkabstract{\mathbb{V}}} \times \mkregion{\mkabstract{\mathbb{V}} \times \mkabstract{\mathbb{V}}} \mapsto \mathbb{B} 
\\
\mkabstract{\sqsubseteq} & \mbox{Enclosure} & \mkregion{\mkabstract{\mathbb{V}} \times \mkabstract{\mathbb{V}}} \times \mkregion{\mkabstract{\mathbb{V}} \times \mkabstract{\mathbb{V}}} \mapsto \mathbb{B} 
\\
\mkabstract{==} & \mbox{Aliassing} & \mkregion{\mkabstract{\mathbb{V}} \times \mkabstract{\mathbb{V}}} \times \mkregion{\mkabstract{\mathbb{V}} \times \mkabstract{\mathbb{V}}} \mapsto \mathbb{B} 
\end{array}
\]
For sake of presentation, we omit functions used for getting initial abstract values.
Function $\mkabstract{S}$ takes as input the name of an operation executed by an assembly instruction (e.g, \reg{add}).
Note that a single assembly instruction may execute several such operations (e.g., \reg{imul}).
Function $\mkabstract{S}$ symbolically executes that operation on abstract values.
The separation, enclosure and aliassing relations are over abstract regions.
Function $\gamma_\mathbb{R}$ concretizes an abstract region $\mkregion{a,\mathit{si}}$ by applying $\gamma_\mathbb{V}$ to both the address and the size.

\begin{figure*}
\begin{tabular}{ll}
The join must be overapproximative:&
$\begin{array}{l}
	\gamma_\mathbb{V}(a_0) \subseteq \gamma_\mathbb{V}(a_0 \sqcup a_1)
\end{array}$
\\
Separation must be overapproximative:&
$\begin{array}{l}
	{r_0} ~\mkabstract{\bowtie}~ {r_1} \wedge r_0 \in \gamma_\mathbb{R}({r_0}) \wedge r_1 \in \gamma_\mathbb{R}({r_1}) \implies r_0 \bowtie r_1
\end{array}$
\\
Semantics must be overapproximative:&
$\begin{array}[t]{l}
	v_0 \in \gamma_\mathbb{V}(a_0) \cup \{\tainted\} \wedge v_1 \in \gamma_\mathbb{V}(a_1) \cup \{\tainted\} \implies 
	v_0~\Box~v_1 \in \gamma_\mathbb{V}(\mkabstract{S}(\mkabstract{\Box}, a_0, a_1)) \cup \{\tainted\}
\end{array}$
\\
The join respects separation:&
$\begin{array}{l}
	{r} ~\mkabstract{\bowtie}~ \mkregion{a_0 \sqcup a_1, \mathit{si}} \implies {r} ~\mkabstract{\bowtie}~ \mkregion{a_0, \mathit{si}}
\end{array}$
\\
The join respects enclosure:&
$\begin{array}{l}
	\mkregion{a_0, \mathit{si}} ~\mkabstract{\sqsubseteq}~ \mkregion{a_0 \sqcup a_1, \mathit{si}}
\end{array}$
\\
Enclosure in separate region:&
$\begin{array}{l}
	{r_0} ~\mkabstract{\sqsubseteq}~ {r_1} \wedge {r_1} ~\mkabstract{\bowtie}~ {r_2} \implies {r_0} ~\mkabstract{\sqsubseteq}~ {r_2}
\end{array}$
\\
\end{tabular}
\caption{Examples of Proof Obligations}
\end{figure*}

The above generic functions must satisfy a set of 26 proof obligations.
Most of these are trivial (e.g., the join must be a commutative semigroup over $\mkabstract{\mathbb{V}}$; enclosure is transitive and reflexive).
We here provide some examples of interesting proof obligations.
Join, separation, enclosure, aliasing, and abstract semantics must be overapproximative.
For separation, this means that concretizing separate abstract regions must produce separate concrete regions; note that concrete separation $\bowtie$ can be expressed in terms of linear equalities for non-tainted values.
For the semantics, overapproximation is formulated by stating that the result of applying some concrete operation $\Box$ must be overapproximated by the semantics provided by $\mkabstract{S}$ for the corresponding symbolic operation $\mkabstract{\Box}$. Here, $\Box$ can, e.g., be an arithmetic, logical or bitvector operation.
The presentation shows binary operators, but this easily extends to n-ary operators.
Finally, there is a set of algebraic properties concerning the relations over regions, such as ``Enclosure in separate region''.

\section{Overview of Instantiations}\label{sec:definitions}

We provide three different instantiations, each of which has been formally defined in Isabelle/HOL and for each of which all proof obligations have been proven.
All three domains are represented by sets of elements from a different universe: $\mathcal{C}$, $\mathcal{B}$ and $\mathcal{S}$.
Elements of these domains satisfy the following syntax:
%
\[
\ptrdomain{C}{\underbracket{\{c_0,c_1,\ldots\}}_{\mbox{\begin{tabular}{c}$\mathcal{C}$onstant\\\hspace{-11ex}\rlap{Computation}\end{tabular}}}}
\mbox{\hspace{3ex}}\mid\mbox{\hspace{3ex}}
\ptrdomain{B}{\underbracket{\{b_0,b_1,\ldots\}}_{\mbox{$\mathcal{B}$ases}}}
\mbox{\hspace{3ex}}\mid\mbox{\hspace{3ex}}
\ptrdomain{S}{\underbracket{\{s_0,s_1,\ldots\}}_{\mbox{$\mathcal{S}$ources}}}
\]

The abstract points first of all concern symbolic expressions, denoted by type $\mathbb{E}$.
As \emph{operators}, symbolic expressions have arithmetic and logical operations, bit-level operations such as sign-extension or masking, and other operations related to x86-64 assembly instructions.
There is a dereference operator $\memread{e}{\mathit{si}}$ -- where $e$ and $\mathit{si}$ are symbolic expressions -- that models reading $\mathit{si}$ bytes from address $e$.
As \emph{operands}, symbolic expressions have \emph{immediate values}, \emph{state parts}, \emph{constants} or \emph{heap-pointers}.
Immediate values are words of fixed size.
State parts can be registers (e.g., \reg{rax}, \reg{edi}, $\ldots$) or flags (\reg{ZF}, \reg{CF}, \ldots).
Constants are values of state parts relative to the initial state. For example, \regz{rax} denotes a constant: the initial value stored in register \reg{rax}.
Constants thus represent initial values of state parts when the current function was called.
A heap-pointer is an expression of the form $\mathtt{alloc}[\mathit{id}]$ and models the return value of an allocation function such as \texttt{malloc}.
The $\mathit{id}$ is an identifier to distinguish different mallocs.
A \emph{constant computation} is a symbolic expression with as operands only immediate values and constants.

\textbf{Pointers with constant computations.}
An expression of the syntax $\ptrdomain{C}{\{c_0,c_1,\ldots\}}$ is based on a non-empty set of constant computations.
It non-deterministically represents any value from the given set.
For example, expression $\ptrdomain{C}{\{\regz{rsp} - 8, \regz{rdi} + 16\}}$ simply represents any of the two constant computations.
This domain is the most concrete.

\textbf{Pointers with bases.}
Expressions of the form $\ptrdomain{B}{\{b_0,b_1,\ldots\}}$ \emph{partly} abstract away from how the pointer is computed.
It keeps track only of positive \emph{addends} in the computation that can be recognized as a basis for a pointer.
The given set is a non-empty set of \emph{bases}.
A base is defined by the following datastructure:
\[
\begin{array}{c}
\mathsf{Base} \equiv \mathsf{StackPointer}~\mathit{f} \mid \mathsf{Global}~\mathit{a} \mid \mathsf{Alloc}~\mathit{id} \mid \mathsf{Symbol}~\mathit{name}
\end{array}
\]
Four types of pointer bases are recognized.
The base may be the stackpointer (in x86-64 this is register \reg{rsp}) pointing to somewhere in the stackframe of a certain function $f$.
The base may be a global address $a$. 
At the binary level, the global address space is pointed to using immediate values and thus $a$ is a 64-bit immediate word.
The base may be the result of some dynamic memory allocation. 
Finally, a base can be a named symbol.
At the binary level, external variables are named symbols with immediate addresses.

Note that any pointer will have at most one base. 
For example, no pointer computation would allow addition of the stack pointer and a heap address.
The datastructure allows a set of bases to allow the abstract pointer to non-deterministically represent different pointers with bases.

\textbf{Pointers with sources.}
Expressions of the form $\ptrdomain{S}{\{s_0,s_1,\ldots\}}$ \emph{fully} abstract away from how the pointer is computed.
Whereas pointers with identifiable bases contain sufficient information at least for roughly establishing a memory designation, a pointer parameterized with sources only concerns what information has been used in the computation of the pointer.
In other words, the pointer has been computed by some expression with operands from the given set of sources.
The following sources are possible:
\[
\begin{array}{c}
\mathsf{Source} \equiv \mathsf{Constant}~\mathit{c} \mid \mathsf{Base}~\mathit{b} \mid \mathsf{Fun}~\mathit{name}
\end{array}
\]
A source can be a constant, a base or the return value of some function.
For example, expression $\ptrdomain{S}{\{\regz{rdi}, \regz{rsi}, \mathsf{Fun}~\mathit{getc} \}}$ indicates a pointer that has been computed using \emph{only} the initial values of registers \reg{rdi} and \reg{rsi} and the return value of function \texttt{getc}.

\begin{example}\label{ex:running1}
Consider the running example in Figure~\ref{fig:running}.
The example allocates memory and performs some memory writes.
Two regions (at addresses \hexa{0x3006} and \hexa{0x3008}) are relative to the initial value of the stackpointer \regz{rsp}.
The abstract pointers corresponding to these memory writes may be represented in $\mathcal{C}$: respectively $\ptrdomain{C}{\{\regz{rsp}-16\}}$ and $\ptrdomain{C}{\{\regz{rsp}-8\}}$.
The memory write at address \hexa{0x3005} occurs on the heap.
Even if $\mathit{offset}$ is some convoluted dynamically computed offset, the abstract pointer will be representable in $\mathcal{B}$: $\ptrdomain{B}{\{\mathsf{Alloc}~\hexa{0x3003}\}}$.
Finally, the memory write at address \hexa{0x3007} writes to the address initially stored in register \reg{rdx} plus the return value of function \texttt{getc}.
The abstract pointer is representable in $\mathcal{S}$: $\ptrdomain{S}{\{\regz{rdx},\mathsf{Fun}~\texttt{getc}\}}$.
\end{example}

\begin{figure*}[tb!]
\hspace{-2ex}\noindent
\begin{tabular}{cc}
  \begin{subfigure}[b]{.42\linewidth}
    \begin{lstlisting}[language={[x86masm]Assembler}, escapeinside=||]
|\hexa{0x3000:}| mov rbp, rsp
|\hexa{0x3001:}| call getc
|\hexa{0x3002:}| mov rcx, rax
|\hexa{0x3003:}| call malloc
|\hexa{0x3004:}| lea rsi, [rbp - 8]
|\hexa{0x3005:}| mov qptr [rax + |$\mathit{offset}$|], rsi
|\hexa{0x3006:}| mov qptr [rsp-16], |\hexa{0x2000}|
|\hexa{0x3007:}| mov qptr [rdx + rcx], rax
|\hexa{0x3008:}| mov dptr [rsi], 0
    \end{lstlisting}
    \caption{x86-64 Assembly}
  \end{subfigure}
&
  \begin{subfigure}[b]{.5\linewidth}
\[
	\begin{array}{lcl}
	\sigma.\reg{rsp} &=& \ptrdomain{C}{\{\regz{rsp}\}}                           \\
  \sigma.\reg{rbp} &=& \ptrdomain{C}{\{\regz{rsp}\}}                           \\
  \sigma.\reg{rcx} &=& \ptrdomain{S}{\{\mathsf{Fun}~\mathit{getc}\}}      \\
 	\sigma.\reg{rax} &=& \ptrdomain{C}{\{\mathtt{alloc}[\hexa{0x3003}]\}}        \\
	\sigma.\reg{rsi} &=& \ptrdomain{C}{\{\regz{rsp} - 8\}}                       \\
\memreadS{\sigma}{\ptrdomain{B}{\{\mathsf{Alloc}~\hexa{0x3003}\}}} &=& \ptrdomain{C}{\{\regz{rsp} - 8\}} \\
\memreadS{\sigma}{\ptrdomain{C}{\{\regz{rsp} - 16\}},8} &=& \ptrdomain{B}{\{\mathsf{Global}~\hexa{0x2000}\}} \\
\memreadS{\sigma}{\ptrdomain{S}{\{\regz{rdx},\mathsf{Fun}~\texttt{getc}\}}} &=& \ptrdomain{C}{\{\mathtt{alloc}[\hexa{0x3003}]\}} \\
\memreadS{\sigma}{\ptrdomain{C}{\{\regz{rsp} - 8\}},4}   &=& \top \\
	\end{array}
\]
    \caption{Abstract State}
    \label{fig:atstate}
  \end{subfigure}
	\end{tabular}
	\caption{Assembly code. For sake of presentation, pseudo code is given on the right. 
$\mathit{offset}$ denotes some dynamically computed offset. Instruction \texttt{\textcolor{blue}{lea}} loads an address into register \reg{rsi} without reading from memory. The binary has a data section with address range \hexa{0x2000} to \hexa{0x2040}.}
  \label{fig:running}
\end{figure*}

\nocompile{
\begin{figure}[tb!]
\begin{lstlisting}[language={[x86masm]Assembler}, escapeinside=||]
|\hexa{0x3000:}|  mov rbp, rsp
|\hexa{0x3001:}|  call getc
|\hexa{0x3002:}|  mov rcx, rax
|\hexa{0x3003:}|  call malloc
|\hexa{0x3004:}|  lea rsi, [rbp - 8]
|\hexa{0x3005:}|  mov qword ptr [rax + |$\mathit{offset}$|], rsi
|\hexa{0x3006:}|  mov qword ptr [rsp-16], |\hexa{0x2000}|
|\hexa{0x3007:}|  mov qword ptr [rdx + rcx], rax
|\hexa{0x3008:}|  mov dword ptr [rsi], 0
\end{lstlisting}
\caption{Assembly code. Instruction addresses are artificially fabricated. $\mathit{offset}$ denotes some dynamically computed offset. Instruction \lstinline[language={[x86masm]Assembler}]{lea} loads an address into register \reg{rsi} without reading from memory. The binary has a data section with address range \hexa{0x2000} to \hexa{0x2040}.}
  \label{fig:running}
\end{figure}
}

\nocompile{
\subsection{Concretization of Ballpark Pointers}

Figure~\ref{fig:pointers} shows three different forms of ballpark pointers corresponding to three different lattices.
In order to present the exact meaning of each of these forms, we map each ballpark pointer to a concrete representation.
The concrete representation of a ballpark pointer is a set of symbolic expressions (see Figure~\ref{fig:domains}).

\begin{figure}
\centering
\begin{tikzpicture}

\draw [gray] (1.5,2) ellipse (1.25 and 2);
\draw [gray] (5.0,2) ellipse (1.25 and 2);

\node (a) at (1.5, 3.25) {};
\node (b) at (5.0, 3.25) {};
\draw[->,thick] (a)  to [out=22,in=157,looseness=1] node[midway,above] {$\alpha$}  (b);

\node (c) at (5.0, 0.75) {};
\node (d) at (1.5, 0.75) {};
\draw[->,thick] (c)  to [out=-157,in=-22,looseness=1] node[midway,below] {$\gamma$} (d);

\node[fill=lightgray,opacity=.7,text opacity=1,anchor=west] at (4.5,2) {
  \begin{tabular}{lll}
		Ballpark Pointer & $\mathbb{P}$ &$p$    \\
		Ballpark State Part & $\mathbb{SP}$ & $\mathit{sp}$   \\
		Ballpark State & $\mathbb{S}$ & $\sigma$ \\
  \end{tabular}
};
\node[fill=lightgray,opacity=.7,text opacity=1,anchor=east] at (2,2) {
  \begin{tabular}{lll}
		Expression & $\mathbb{E}$ & e \\ 
		Concrete State Part & $\mathbb{SP}_C$ & $\mathit{sp}_C$\\
		Concrete State & $\mathbb{S}_C$ & $\sigma_C$\\
  \end{tabular}
};
\end{tikzpicture}
\caption{Concrete vs. the abstract domain}
\label{fig:domains}
\end{figure}

Function $\mathsf{is\_base(b,c)}$ takes as input a base $b$ and a constant computation $c$ and returns true only if base~$b$ is a positive addend in constant computation $c$.
The stackpointer is abstracted only when the constant computation subsequently performs subtraction.
If a value is added to constant \regz{rsp}, then this is not abstracted to a $\mathsf{StackPointer}$ base, as the resulting pointer would point to the stackframe of a different function.
For global pointers it is checked whether the constant computation contains an immediate that falls in the range of the data sections of the binary.
Any allocation constant can directly be mapped to a base, if it occurs as positive addend in the constant computation.
Finally, if the expression contains an immediate that in the binaries' symbol lookup table is mapped to a symbol, it can be abstracted to a $\mathsf{Symbol}$ base.

\[
\begin{array}{lcl}
  \mathsf{is\_base}(\mathsf{StackPointer}~\mathit{f}, \regz{rsp} - i)                          &\mbox{~if~}&  i \mbox{~is an immediate and $f$ is the current function} \\
  \mathsf{is\_base}(\mathsf{Global}~i, \ldots + i + \ldots)                                    &\mbox{~if~}&  i \mbox{~is an immediate within data section range} \\
  \mathsf{is\_base}(\mathsf{Alloc}~\mathit{id}, \ldots + \mathtt{alloc}[\mathit{id}] + \ldots) \\
  \mathsf{is\_base}(\mathsf{Symbol}~\mathit{name}, \ldots + i  + \ldots)                       &\mbox{~if~}&  i \mbox{~is an immediate mapped to a symbol}
\end{array}
\]

In similar fashion function $\mathsf{is\_src(s,c)}$ is defined.
It takes as input a source $s$ and a constant computation $c$ and returns true if and only if source $s$ is an operand in constant computation $c$.
This function naturally can be overloaded to apply to bases as well: $\mathsf{is\_src(s,b)}$ takes as input a base $b$.

Concretization is defined by a function $\gamma$ of type $\mathbb{P} \mapsto 2^\mathbb{E}$ that takes as input a ballpark pointer and produces a set of symbolic expressions.
\begin{definition}
The \emph{concretization} of a ballpark pointer is defined as follows:
\[
\begin{array}{lcl}
\gamma(\ptrdomain{C}{C}) & \equiv & C\\
\gamma(\ptrdomain{B}{B}) & \equiv & \{ e~\mid~ \exists b \in B \cdot \mathsf{is\_base(b,e)} \}\\
\gamma(\ptrdomain{S}{S}) & \equiv & \{ e~\mid~ \exists s \in S \cdot \mathsf{is\_src(s,e)} \}
\end{array}
\]
\end{definition}

\textbf{Why concretization to symbolic expressions instead of immediate values?}
It is counterintuitive to use symbolic expressions as a concrete domain.
Typically, immediate values are used.
Typical approaches aim to derive abstract representations of the \emph{values} stored in a certain variable.
Different abstract domains are, e.g., the interval domain~\cite{cousot1977abstract}, the congruence domain~\cite{granger1989static}, the octagon domain~\cite{mine2006octagon}, the polyhedral model~\cite{cousot78,feautrier1991dataflow,darte2005lattice}, or signed-agnostic intervals~\cite{navas2012signedness}.
All these abstract domains can be mapped to sets of immediate values.
In the case that a value is a pointer, its abstract representation provides information on overapproximative bounds on the \emph{value} of that pointer.

In contrast, our approach does not aim to abstractedly represent sets of values, but aims to abstractedly represent \emph{how a pointer was computed}.
This is achieved by mapping an abstract element to a set of computations, i.e., a set of symbolic expressions.
Ballpark pointers provide information on how pointers were computed, and that information can be leveraged to formulate separation and aliasing relations in many cases.

The only possible concretization to immediate values would map every ballpark pointer to the entire 64-bit address space.
For example, consider the ballpark pointer $\ptrdomain{B}{\{\mathsf{Alloc}~\mathit{id}\}}$.
This ballpark pointer represents any pointer that has been computed with a \texttt{malloc} return value as positive addend.
Since we cannot make assumptions on the structure of the 64-bit memory space, that return value can be anywhere, and any value may have been added to it.
From that point of view, there formally is no difference between, e.g., $\ptrdomain{B}{\{\mathsf{Alloc}~\mathit{id}\}}$ and $\ptrdomain{B}{\{\mathsf{StackPointer}~f\}}$.
Still, we argue that is safe to assume separation between a pointer that has been computed with as base a heap-pointer and a pointer based on the local stack frame.

}

\subsection{Separation over Abstract Pointers}

Each of the three domains allows a different type of reasoning over separation.
Intuitively, separation between two pointers implies that \emph{necessarily} -- i.e., for any address represented by the two pointers -- two memory writes commute.
Separation is denoted by the infix operator $\bowtie$.

Consider two pointers $p_0 + \mathit{offset}$ and $p_1$.
If both pointers are defined by constant computations, separation can be decided by an SMT solver.
For the other domains, however, one may argue that separation cannot be guaranteed under any circumstances.
We argue this is too strict.
For example, if one pointer is based on the heap and the other on the local stack, then it is safe to assume that even if $\mathit{offset}$ is chosen nefariously, writes to the two pointers will either cause a segmentation fault or they will commute.
Assuming separation is thus safe.
On the other hand, we argue that if both pointers are based on local stackframes of different functions, then assuming separation is dangerous as a stack overflow may cause the pointers to overlap.

It may be the case that separation is not necessarily true, but \emph{desirable}.
Consider a pointer with source set $\{\regz{rdi}\}$.
The only way for that pointer to overlap with the stackframe of the current function, is if initially -- at the time the current function was called -- register \reg{rdi} contained a pointer to below its own stackframe.
That is invalid source code and can therefore be considered undesirable.
For various use cases, it can be useful to distinguish ``necessarily'' separation from ``desirable'' separation (see Section~\ref{sec:usecases}).

For domains $\mathcal{B}$ and $\mathcal{S}$, whether two abstract pointers are separate is in essence domain specific knowledge on what assumptions can validly be made when dealing with an x86-64 binary.
In Figure~\ref{fig:separation}, this knowledge will be encoded in algebraic definitions.
Depending on ones view, one may easily add or remove cases of separation, and move cases from ``desirable'' to ``necessarily'' or the other way around.

\textbf{Pointers with constant computations.}
For two constant computations, separation can be formulated as a linear programming problem, solvable by an SMT solver~\cite{moura2008z3}.
This requires the size of the regions pointed to, to be known.
Typically, these sizes can syntactically be inferred from the current instruction.
We define function $\mathsf{SMT}[\bowtie]$ as a function that takes as input two constant computations $c_0$ and $c_1$, and two sizes $\mathit{si}_0$ and $\mathit{si}_1$ and returns true if and only if the following SMT problem is unsatisfiable:
$
	c_0 \leq a < c_0 + \mathit{si}_0 \wedge c_1 \leq a < c_1 + \mathit{si}_1
$. 
In words, an address $a$ that is in both regions should be unsatisfiable.
Constant computations typically consist of arithmetic operations containing only $+$, $-$ and $*$, and bit-level operations supported by SMT-theory \texttt{QF\_BV}.

\begin{figure*}[htb]
\[    
\begin{array}{clclcl}
\hlinewd{1pt}
\multicolumn{6}{l}{\text{\textbf{Necessarily:}}}\\
\mbox{\hspace{1em}}
&\mathsf{StackPointer}~f   & \bowtie^\mathsf{B} & \mathsf{Alloc}~\mathit{id} &&\\
&\mathsf{StackPointer}~f   & \bowtie^\mathsf{B} & \mathsf{Global}~\mathit{a} &&\\
&\mathsf{StackPointer}~f   & \bowtie^\mathsf{B} & \mathsf{Symbol}~\mathit{name} &&\\
&\mathsf{Global}~a         & \bowtie^\mathsf{B} & \mathsf{Alloc}~\mathit{id} &&\\
&\mathsf{Alloc}~\mathit{id_0} & \bowtie^\mathsf{B} & \mathsf{Symbol}~\mathit{name} && \\
&\mathsf{Alloc}~\mathit{id_0} & \bowtie^\mathsf{B} & \mathsf{Alloc}~\mathit{id_1} & \mbox{if} & \mathit{id_0} \neq \mathit{id_1} \\\grayhline
&\mathsf{Base}~b_0    & \bowtie^\mathsf{S} & \mathsf{Base}~b_1  & \equiv & b_0 \bowtie^\mathsf{B} b_1 \\
&\mathsf{Constant}~c   & \bowtie^\mathsf{S} & \mathsf{Base~Alloc}~\mathit{id}  & \mbox{if} & \mathit{id} \mbox{~belongs to current function}\\
&\mathsf{Fun}~f  & \bowtie^\mathsf{S} & s_1  & & \\\hline
\multicolumn{6}{l}{\text{\textbf{Desirable:}}}\\
&\mathsf{StackPointer}~f   & \bowtie^\mathsf{B} & \mathsf{StackPointer}~f' & \mbox{if} & f \neq f'\\
&\mathsf{Global}~a   & \bowtie^\mathsf{B} & \mathsf{Global}~a' & \mbox{if} & \mbox{$a$ and $a'$ are from different data sections}\\
&\mathsf{Global}~a   & \bowtie^\mathsf{B} & \mathsf{Symbol}~\mathit{name} & & \\\grayhline
& \mathsf{Constant}~c   & \bowtie^\mathsf{S} & \mathsf{Base~StackPointer}~\mathit{f}  &  \mbox{if} & f \mbox{~is current function}\\\hlinewd{1pt}
\end{array}
\]
\caption{Separation over abstract pointers: the smallest symmetric relation such that the above holds.}
\label{fig:separation}

\end{figure*}

\textbf{Pointers with bases.}
Separation for pointers with bases can be decided algebraically.
Figure~\ref{fig:separation} presents relation~$\bowtie^\mathsf{B}$, i.e., separation over bases.
The local stack frame of any function $f$ is assumed to be separate from the global address space and the heap.
As argued above, separation over different stackframes is desirable, but not necessarily.
The global address space is assumed to be separate from the local address space and the heap.
The global address is not necessarily assumed to be separate from the symbols in the binary, i.e., external variables.
Both are typically constant immediates, within the range of addresses within the binary.
Separation is considered to be desirable.
If two global addresses are based on immediates that fall within the range of different data sections of the binary, then separation is considered desirable.

Two allocations with different ids are assumed to be separate, as different ids ensure these were different calls to \texttt{malloc}.
\deleted{In a similar argument as above, we argue that} A pointer based on \texttt{malloc} \added{should} \deleted{can} not lead to a write overlapping with the region of a different \texttt{malloc}.
\deleted{Even if nefariously an $\mathit{offset}$ is added or subtracted from the pointer, the writes will either cause a fault, or commute.}
\added{Whether this separation should be considered necessary or desirable is debatable.
Heap overflows are common and critical vulnerabilities, and if these are part of the attacker model, then this separation should be considered desirable.}
Note that two pointers based on allocations with the same \textit{id}, are not necessarily \emph{not} separate.


\textbf{Pointers with sources.}
Since pointers with sources are more abstract, there are fewer cases that allow deciding separation necessarily.
For example, consider two pointers with source sets $\{\regz{rdi}\}$ and $\{\regz{rsi}\}$.
Whether these can be considered separate depends on the initial state of the current function, i.e., whether registers \reg{rdi} and \reg{rsi} initially were separate.
However, there still are cases when separation can be decided.
First, when two sources are actually bases, the above relation~$\bowtie^\mathsf{B}$ can used as decision procedure.
Second, consider a pointer computed using constant \regz{rdi} and a newly allocated pointer.
As long as the allocation happened within the current function, separation can safely be assumed: essentially this assumes that the initial state cannot predict where \texttt{malloc} will allocate memory.
Since sources of the form $\mathsf{Fun}~f$ do not represent pointers, separation between these sources and all other sources can be assumed.
Finally, consider a source $\regz{rdi}$, i.e., the initial value of register \reg{rdi} when the current function was called.
It is possible that the caller initialized $\reg{rdi}$ with a pointer to the stackframe of the callee, but we consider it undesirable.

\nocompile{
In order to reason over separation of two pointers from different lattices, we must express them both in the same lattice.
Relative to Figure~\ref{fig:pointers}, pointers shift only from left to right.
Shifting occurs through function $\mathsf{shift}$.
One application of $\mathsf{shift}$ aims to shift the pointer one lattice to the right.
However, it may be the case that a pointer with constant computations has no identifiable bases, in which case it must directly be shifted to a pointer with sources.
Given two ballpark pointers, function $\mathsf{equalize}(p_0,p_1)$ takes as input two ballpark pointers $p_0$ and $p_1$ and produces a tuple $(p_0',p_1')$ where both 
pointers have been shifted to the same lattice.
This can be done straightforwardly: if one of the pointers equals $\top$ then the returned tuple is $(\top,\top)$.
Otherwise, apply $\mathsf{shift}$ to the pointer in the left-most lattice and recursively lift the resulting tuple to the same lattice.
\[
\begin{array}{lcl}
	\mathsf{shift}(\ptrdomain{C}{C}) &\equiv& 
		\left\{
      \begin{array}{lc}
				\ptrdomain{B}{B} \mbox{~where~} B = \{ b~\mid~\exists c \in C~\cdot~ \mathsf{is\_base(b,c)} \} \wedge B \neq \emptyset \\
				\ptrdomain{S}{S} \mbox{~where~} S = \{ s~\mid~\exists c \in C~\cdot~ \mathsf{is\_src(s,c)} \} \wedge B = \emptyset \\
			\end{array}
		\right.\\
	\mathsf{shift}(\ptrdomain{B}{B}) &\equiv& \ptrdomain{S}{S}\mbox{~where~} S = \{ s~\mid~\exists b \in B~\cdot~ \mathsf{is\_src(s,b)} \} \\
	\mathsf{shift}(\ptrdomain{S}{S}) &\equiv& \top
\end{array}
\]

We can now combine the above definitions into a decision procedure for separation of ballpark pointers.
We first consider separation between ballpark pointers within the same lattice.
Separation may depend on the sizes, i.e., number of bytes of the accessed region.
We thus define a \emph{ballpark region} as a tuple $(p,\mathit{si})$ of type $\mathbb{P} \times \mathbb{N}?$ where $p$ is a ballpark pointer and $\mathit{si}$ is optionally the size of the region, or nothing.
Relation $\bowtie$ becomes a relation over ballpark regions.
\begin{definition}\label{def:separation}
Two ballpark regions $(p_0,\mathit{si}_0)$ and $(p_1,\mathit{si}_1)$ are \emph{separate}, notation $(p_0,\mathit{si}_0) \bowtie (p_1,\mathit{si}_1)$, if and only if, for $\mathsf{equalize}(p_0,p_1) = (p_0',p_1')$:
\[
\begin{array}{lclcl@{\hspace{8ex}}r}
	\forall c_0 \in C_0, c_1 \in C_1 \cdot \mathsf{SMT}[\bowtie](c_0, c_1, \mathit{si}_0, \mathit{si}_1)  & \mbox{~if~} & p_0' = \ptrdomain{C}{C_0} &\wedge& p_1' = \ptrdomain{C}{C_1}  & \mbox{(constants)}\\
	\forall b_0 \in B_0, b_1 \in B_0 \cdot b_0 \bowtie^\mathsf{B} b_1   & \mbox{~if~} & p_0' = \ptrdomain{B}{B_0} &\wedge& p_1' = \ptrdomain{B}{B_1}  & \mbox{(bases)}\\
	\forall s_0 \in S_0, s_1 \in S_0 \cdot s_0 \bowtie^\mathsf{S} s_1   & \mbox{~if~} & p_0' = \ptrdomain{S}{S_0} &\wedge& p_1' = \ptrdomain{S}{S_1}  & \mbox{(sources)}\\
	\mbox{false} & \mbox{~if~} & p_0' = \top &\wedge & p_1' = \top & \mbox{(unknown)}
\end{array}
\]
\end{definition}

}

\begin{example}\label{ex:running2}
Revisiting Example~\ref{ex:running1}, we address the memory accesses in order of execution.
First a write happens to pointer $p_0 = \ptrdomain{B}{\{\mathsf{Alloc}~\hexa{0x3003}\}}$.
Then, a write happens to pointer $p_1 = \ptrdomain{C}{\{\regz{rsp}-16\}} = \ptrdomain{B}{\{\mathsf{StackPointer}~\hexa{0x3000}\}}$.
These two regions are necessarily separate for all three domains.
Then, a write happens to pointer $p_2 = \ptrdomain{S}{\{\regz{rdx},\mathsf{Fun}~\texttt{getc}\}}$.
Separation between $p_0$ and $p_2$ follows necessarily for domains $\mathcal{B}$ and $\mathcal{S}$.
However, separation between $p_1$ and $p_2$ is not necessarily true.
It is, however, \emph{desirable}.
Finally, a write happens to pointer $p_3 = \ptrdomain{C}{\{\regz{rsp}-8\}}$.
Similar reasoning applies for separation with $p_0$ and $p_2$.
For domain $\mathcal{C}$, separation between $p_1$ and $p_3$ is decided by an SMT solver.
\end{example}

In similar fashion, enclosure and aliassing are instantiated.
The join is defined as set-union. 
Instantiating abstract semantics and concretization is straightforward. 

\section{Intraprocedural Pointer Analysis}\label{sec:intra}

Algorithmically, intraprocedural pointer analysis can be achieved using a standard abstract interpretation algorithm~\cite{cousot1977abstract,horwitz1987efficient}.
Essentially, starting at some initial abstract state with the instruction pointer set to the entry point of the binary (or a function of interest), symbolic execution traverses the assembly instructions step-by-step using function $\mkabstract{\mbox{step}}$.
It runs until a return statement, or an exiting function call.
Whenever an instruction address is visited twice, the current state $\sigma_\text{curr}$ is compared to the state $\sigma_\text{stored}$ belonging to the previous visit.
If state $\sigma_\text{stored}$ is more abstract then state $\sigma_\text{curr}$, exploration can stop.
Otherwise, the two states must be \emph{joined} to a state $\sigma_\text{join}$. 
That state is stored and exploration continues.
Effectively, this algorithm provides a (partial) mapping from instruction addresses to stored symbolic states.
A post-state $\sigma_\text{post}$ is computed by taking the supremum of all terminal non-exiting states.
Thus, we define \emph{state} and a \emph{symbolic step function} on top of the generic constituents.


\textbf{Abstract States.}~
A state stores values for registers, flags and memory.
An abstract \emph{state part} is defined by the following datatype:
\[
\begin{array}{c}
\mkabstract{\mathbb{SP}} \equiv \mathsf{Register}~\reg{r} \mid \mathsf{Flag}~\reg{f} \mid \mathsf{Memory}~(\mkabstract{\mathbb{V}} \times \mathbb{N}?)
\end{array}
\]
A memory statepart is represented as a region, a tuple with an abstract pointer and optionally a size.
Whenever clear from context, we will omit the constructors.
For example, $\mathit{sp} = \reg{rax}$ simply means that statepart $\mathit{sp}$ equals register $\reg{rax}$.
An abstract state is a partial mapping from stateparts to abstract values:
$
	\mkabstract{\mathbb{S}} \equiv  \mkabstract{\mathbb{SP}} \rightharpoonup \mkabstract{\mathbb{V}}
$.
Notation $\sigma.\reg{r}$ denotes the current value stored in register~\reg{r}.
Memory partitions the address space into \emph{separate} abstract regions, and assigns stored values to each of these regions.
Notation $\memreadS{\sigma}{r}$ denotes the value stored in region $r$, where the size is omitted when not available.


\begin{example}\label{ex:state}
Revisit the running example in Figure~\ref{fig:running}.
Based on the separation relations decided in Example~\ref{ex:running2}, the state $\sigma$ in Figure~\ref{fig:atstate} can model the state after execution of the assembly snippet.
Note that this state is based on \emph{desirable} separation, as otherwise regions would need to get joined.
Intuitively, the following claims are modeled by this state:
\begin{itemize}
\item Any region based on a heap-pointer allocated at line \hexa{0x3003} contains no pointers other than the pointer $\regz{rsp} - 8$.
\item The 8-byte region at address $\regz{rsp} - 16$ stores a value that may point to the global data section.
\item A memory write has occurred that is determined by the initial value of register \reg{rdx} as well as input provided via function \texttt{getc}. That memory write was assumed to be separate from all others.
\item If the value stored in region $\mkregion{\regz{rsp} - 8,4}$ would be dereferenced, then no information on its designation in memory is known. 
\end{itemize}
\end{example}

The claims in the above example can be formalized by \emph{concretization} of abstract states to concrete ones.
Concretization of state parts is provided by function $\gamma_\mathbb{SP}$ of type $\mkabstract{\mathbb{SP}} \mapsto 2^{\mathbb{SP}}$:
\[
\begin{array}{lcl}
\gamma_\mathbb{SP}(\mathsf{Register}~\reg{r}) &\equiv& \{~ \mathsf{Register}~\reg{r}~\}\\
\gamma_\mathbb{SP}(\mathsf{Flag}~\reg{f}) &\equiv& \{~\mathsf{Flag}~\reg{f}~\}\\
\gamma_\mathbb{SP}(\mathsf{Memory}~\mkabstract{r}) &\equiv& \{~ \mathsf{Memory}~r \mid r \in \gamma_\mathbb{R}(\mkabstract{r})~\}
\end{array}
\]

\begin{definition}\label{def:concrete_states}
The \emph{concretization} of states is function $\gamma_\mathbb{S}$ of type $\mkabstract{\mathbb{S}} \mapsto 2^{\mathbb{S}}$, and is defined as:
\[
\begin{array}{l}
\gamma_\mathbb{S}(\sigma) \equiv 
\{ s \mid \forall \mkabstract{\mathit{sp}} \cdot \sigma.\mkabstract{\mathit{sp}} = \mkabstract{v} \implies  \\
\mbox{\hspace{15ex}}\forall \mathit{sp} \in \gamma_\mathbb{SP}(\mkabstract{\mathit{sp}}) \cdot\exists v \in \gamma_\mathbb{V}(\mkabstract{v}) \cdot s.\mathit{sp} = v \}
\end{array}
\]
\end{definition}
In words, abstract state $\sigma$ is mapped to any concrete state $s$ in which both all stateparts and all stored values have been concretized.

\textbf{Abstract Step Function.}~
Every assembly instruction can be written as a sequence of micro-instructions of the following form:
\[
	\mathtt{dst} \coloneqq f(\mathtt{in}_0, \mathtt{in}_1, \ldots)
\]
A single destination -- be it a register, a memory region or a flag -- gets assigned the return value of a pure operation $f$ that is based on zero or more input-operands (registers, memory regions, flags, or immediate values).
The operation typically corresponds to a mnemonic, e.g., \texttt{add} or \texttt{imul}.
In x86-64, many (but not all) instructions have one destination operand (register or memory) and may set a list of flags.
Typically, it cannot be the case that both destination and sources are memory.
We here make no assumptions, and generalize the symbolic step function over any sequence of micro-instructions.
Transformation from basic assembly to sequences of micro-instructions can be done, e.g., using the Ghidra decompiler 
which translates assembly instructions of various architectures into \emph{low P-code}~\cite{rohleder2019hands}.

The semantics of executing a micro-instruction $\mu$ consists of resolving the input-operands, applying operation $f$, and writing the produced value to the state.
This is implemented in function $\mkabstract{\mbox{step}}(\mu,\sigma)$.
The abstract semantics $\mkabstract{S}$ are used to apply operation $f$ on abstract values.
Functions $\mathsf{read}(r, \sigma)$ and $\mathsf{write}(r, a, \sigma)$ are defined that take care of memory accesses in abstract states.

\nocompile{
Algorithm~\ref{algo:symbstep} shows symbolic execution of a single micro-instruction.
The algorithm is based on \emph{operand resolving} and \emph{reading to} and \emph{writing from} the current state.

\begin{algorithm}[htb]
\begin{algorithmic}
\Function{$\mkabstract{\mbox{step}}$}{$\mu$, $\sigma$}
  \Comment $\mu \equiv \mathtt{dst} \coloneqq f(\mathtt{in}_0, \mathtt{in}_1, \ldots)$
  \State  $\mathit{sp}_i \coloneqq \mathsf{resolve}(\mathtt{in}_i, \sigma)$ 
	\Comment for all inputs $\mathtt{in}_i$
  \State  $a_i \coloneqq \mathsf{read}(\mathit{sp}_i, \sigma)$ 
	\Comment for all resolved inputs $\mathit{sp}_i$
  \State  $a \coloneqq \mkabstract{S}(f, [a_0, a_1, \ldots])$
  \State  $\mathit{sp}_{\text{dst}} \coloneqq \mathsf{resolve(\mathtt{dst})}$
  \State \Return $\mathsf{write}(\mathit{sp}_{\text{dst}}, v, \sigma)$
\EndFunction
\end{algorithmic}
\caption{Symbolic Execution of a single micro-instruction $\mu$}
\label{algo:symbstep}
\end{algorithm}


Function $\mathsf{resolve}$ takes as input an operand of the micro-instruction, and resolves it to a state part.
For registers and flags, this is trivial.
A memory operand must be resolved to an abstract pointer relative to the current state.
The computation of an address of a memory operand itself never reads from memory; resolving an operand only requires reading from registers from the current state.

Function $\mathsf{read}$ takes as input a resolved state part and the current state.
In case of reading from registers and flags, this is largely straightforward.
The x86-64 architecture has various cases of register aliasing, but details are omitted for sake of presentation.
In case of reading from memory, let $r = (p,\mathit{si})$ be a region to be read from.
If region $r$ is \emph{necessarily} aliasing with an existing region $r'$, that value is taken.
If region $r$ \emph{possibly} overlaps with one or more existing regions, the supremum of all their values is taken.
Otherwise, the initial value of the region is considered a source.
\begin{definition}\label{def:read}
Let $r$ be a region and let $\sigma$ be the current state.
Reading from memory is defined as:
\[
\begin{array}{l}
\mathsf{read}(r, \sigma) \equiv
\left\{
	\begin{array}{lcl}
		v' & \mbox{~if~}& \exists r' \cdot r' ~\mkabstract{==}~ r \wedge \memreadS{\sigma}{r'} ~~=~~ v' \\
	  o & \mbox{~if~} & \mbox{any overlap exists}\\
	  \ptrdomain{S}{\{\mkinitial{r}\}} & \mbox{~if~} & \mbox{otherwise}
	\end{array}\right.
\\
\mbox{\hspace{2ex}\textbf{where}}
\\
\mbox{\hspace{5ex}}o \equiv \bigsqcup {\{ v' \mid \exists r' \cdot r ~\mkabstract{\centernot\bowtie}~ r' \wedge \memreadS{\sigma}{r'} ~~=~~ v' \}}
\end{array}
\]
\end{definition}

Function $\mathsf{write}$ modifies state.
Similar to reading, writing to registers and flags is largely straightforward, properly taking into account register aliasing.
Consider a write of value $a$ to region $r$, the definition is initially similar to reading.
If a region $r'$ is found that is \emph{necessarily} aliasing with region $r$, only that region is updated.
If the state provides a non-empty set of \emph{possibly} overlapping regions, all those regions are joined to one region by taking the supremum of all pointers (the definition of the join will follow below).
The new value stored within that region becomes the supremum of all values, including the value $a$ to be written.
All regions necessarily separate remain unchanged.
If no possibly overlapping region exists, the region is inserted as a fresh region.

\begin{definition}\label{def:write}
Let $r$ be a region, let $a$ be an abstract value to be stored, and let $\sigma$ be the current state.
Let $O = \{ (r',a') \mid r ~\mkabstract{\centernot\bowtie}~ r' \wedge \memreadS{\sigma}{r'} ~~=~~ a' \}$ be the set of tuples of possibly overlapping regions and stored abstract values in state $\sigma$.
Let $r_\text{join}$ be the supremum of all regions in $O$, and let $a_\text{join}$ be the supremum of all stored values in $O$.
Writing to memory is defined as:
\[
\begin{array}{l}
\mathsf{write}(r, a, \sigma) \equiv \sigma'~~\mbox{\textbf{such that}}~~\\
\left\{\begin{array}{lcll}
\sigma'.\mathit{sp} & = & \sigma.\mathit{sp} & \mbox{~for all}~\mathit{sp} \mbox{~such that~} sp ~\mkabstract{\bowtie}~ r \\
\memreadS{\sigma'}{r}	& = & a & \mbox{~if~} O = \emptyset\\
\memreadS{\sigma'}{r_\text{join}} & = & a \sqcup a_\text{join} & \mbox{~if~} O \neq \emptyset
\end{array}\right.
\end{array}
\]
\end{definition}

\subsection{Joining Abstract States}



When traversing the binary, the same instruction address may be visited more than once, e.g., due to loops or converging paths statements.
To prevent path explosion or loop unrolling, states can be joined.
To join two states $\sigma_0$ and $\sigma_1$, we insert state parts one-by-one.
Let $(\mathit{sp}, a)$ be a state part and its abstract value in state $\sigma_1$.
We insert this into $\sigma_0$ by reading the state part in $\sigma_0$, joining the read value with the value to be inserted, and performing a write.
Note that this insertion takes care of joining regions in memory if necessary.
\[
\mathsf{insert}(\mathit{sp}, a, \sigma) \equiv \mathsf{write}(\mathit{sp}, p', \sigma) 
~~\mbox{\textbf{where}}~~
p' = a \sqcup \mathsf{read}(\mathit{sp}, \sigma) 
\]

The join of two states is then defined by simply inserting all state parts of one state into the other.
\begin{definition}
Let $(\mathit{sp}_0, p_0), (\mathit{sp}_1, p_1), \ldots$ denote all state parts and values in state $\sigma_1$.
The \emph{join} of two abstract states $\sigma_0$ and $\sigma_1$, notation $\sigma_0 \sqcup \sigma_1$, is defined as:
\[
\begin{array}{l}
\sigma_0 \sqcup \sigma_1 \equiv \mathsf{insert\_all}(\sigma_0)\\
~~\mbox{\textbf{where}}~~
\mathsf{insert\_all} = \mathsf{insert} (\mathit{sp}_0, p_0) \circ \mathsf{insert}(\mathit{sp}_1, p_1) \circ \ldots
\end{array}
\]
\end{definition}
In words, this simply folds function $\mathsf{insert}$ over all state parts in state $\sigma_1$ (note that $\mathsf{insert}$ is curried in this definition).


}

\textbf{Correctness and Termination.~}
Correctness -- the abstract semantics overapproximate the concrete semantics -- has been defined at the start of Section~\ref{sec:overview} and has been formally proven correct in Isabelle/HOL.
Note that it may be the case that paths are unexplored due to unresolved indirections.

Termination has also been proven, but it requires the proof obligation that there does not exist an infinite chain of different states such that $\sigma_0 \sqcup \sigma_1 \sqcup \ldots$.
We thus ensure termination by putting bounds on the sizes of the sets of the abstract pointers.
These bounds are chosen manually.
They do not affect correctness.
Making them larger makes the pointers more precise, but increases running times.
For bounds $\mathcal{C}$, $\mathcal{B}$, and $\mathcal{S}$  the bounds are resp. 10, 5 and 250.
Any pointer with more elements is shifted to $\mkabstract{\top}$.

\nocompile{
\subsection{Soundness (Overapproximation), Completeness, and Termination}

A key soundness property of any abstract interpretation algorithm is overapproximation.
This may seem contrary to the statement made in Section~\ref{sec:intro}, namely, that the approach presented in this paper allows analysis \emph{without} overapproximative bounds.
The overapproximation ensured by our approach is that the ballpark pointers contain overapproximative information on \emph{how the pointer was computed}.
No overapproximative bounds are derived.
We will thus formalize soundness by saying that any ballpark pointer overapproximates the set of concrete expressions that was actually used in the computation of the pointer it represents.

\textbf{Soundness: ballpark states are one-step inductive.}
The algorithm as outlined at the beginning of this section is based on symbolic execution and joining.
As noted, this algorithm provides a (partial) mapping from instruction addresses to symbolic states.
Let $\mathbb{W}_{64}$ denote the set of 64-bit immediate addresses.
The algorithm provides a mapping $\phi$ of type $\mathbb{W}_{64} \rightharpoonup \mathbb{S}$.
From this mapping a function $\mathsf{next}$ can be distilled that returns, given an instruction address $a$ in the domain of $\phi$ the set of instruction addresses belonging to successor states of $\phi(a)$.
Function $\mathsf{fetch}$ returns, given an instruction address, a single instruction stored in the binary.

Soundness is defined by stating that the produced ballpark states are \emph{one-step-inductive}.
This is defined by stating that, for any single instruction address, the concretization of a ballpark state necessarily leads to a concretization of another.
Note that, given a ballpark state $\sigma$, concretization $\gamma(\sigma)$ can be interpreted as a predicate over concrete states (see Definition~\ref{def:concrete_states}).
We can thus use it to formulate Hoare triples~\cite{hoare1969axiomatic}.


\begin{definition}
Mapping $\phi$ is \emph{sound}, if and only if, for any instruction address~$a$ and $a'$ in the domain of $\phi$ such that $a' \in \mathsf{next}(a)$,
\[
\htriple{\gamma(\phi(a))}{ \mathsf{fetch}(a) }{\gamma(\phi(a'))}
\]
\end{definition}

\begin{theorem}\label{thm:soundness}
The algorithm as outlined at the beginning of this section provides a sound mapping~$\phi$.
\end{theorem}
\begin{proof}
Soundness directly follows from two proof obligations.
First, symbolic execution must be overapproximative.
Let $\mu$ denote a micro-instruction.
Symbolic execution should ensure that:
\begin{align}
\htriple{\gamma(\sigma)}{\mu}{\gamma(\mbox{\textsc{symbolic\_step}}(\mu,\sigma))} \tag*{\text{\textbf{(correctness symbolic execution)}}}
\end{align}
Second, the join should always produce more abstract states. Given symmetry of the join, this can be formulated as:
\begin{align}
\gamma(\sigma_0) \subseteq \gamma(\sigma_0 \sqcup \sigma_1) \tag*{\text{\textbf{(correctness join)}}}
\end{align}

Correctness of the join is straightforward, as it simply is a union of the two given pointers.
Correctness of symbolic execution depends on overapproximation of the functions used in Algorithm~\ref{algo:symbstep}.

First, we argue that \emph{reading} from the current state is overapproximative, i.e., for any symbolic expression $e$:
\[
\mathsf{read}(\mathit{sp}, \sigma) = p \implies \forall \sigma_C \in \gamma(\sigma) \cdot \forall \mathit{sp}_C \in \gamma(\mathit{sp}) \cdot \sigma_C.\mathit{sp}_C \in \gamma(p)
\]
Reading a ballpark pointer $p$ stored in a statepart $\mathit{sp}$, means that for any concrete state, any concrete statepart that can be represented by the abstract statepart will store a value represented by ballpark pointer $p$.
Definition~\ref{def:read} ensures this by taking the supremum of any regions that may possibly overlap.
If the current state has no regions that possibly overlap, then a ballpark pointer is introduced that contain the read statepart as a single source.
Note that the first case of the definition, i.e, simply reading a state part that necessarily aliases, is only correct thanks to the invariant that any two state parts in the ballpark state are separate.
Function $\mathsf{write}$ ensures this, by joining any regions that may possibly overlap with the region written to.
Overapproximativity for \emph{resolving} a state part follows straightforwardly from overapproximativity of reading a state part.
Second, we argue that \emph{writing} is overapproximative. Let $\sigma' = \mathsf{write}(\mathit{sp}, p_\text{value}, \sigma)$.
After writing, any concrete state $\sigma'_C$ assigns a value represented by ballpark pointer $p_\text{value}$ to any concrete statepart represented by $\mathit{sp}$:
\[
\forall \sigma'_C \in \gamma(\sigma') \cdot \forall \mathit{sp}_C \in \gamma(\mathit{sp}) \cdot \sigma'_C.\mathit{sp}_C \in \gamma(p_\text{value})
\]
Definition~\ref{def:write} ensures this by joining all stateparts that may possible overlap into a single region, if any.
Finally, we argue that $\abstraction$ is overapproximative, i.e., for any symbolic expression~$e$:
\[
e \in \gamma(\abstraction(e))
\]
This follows straightforwardly from Definition~\ref{def:abstraction}.
Soundness of symbolic execution is a corollary of these arguments for overapproximation.
\end{proof}

\textbf{Completeness.}
Completeness means that all reachable instruction addresses have been visited.
At the binary-level, there is no ground truth to what the actual set of instruction addresses is.
It can be the case that instructions are missed, e.g., when indirections (dynamically computed control flow transfers) cannot be resolved properly.
However, it is complete \emph{modulo} unresolved indirections.
That means that for every instruction, either an overapproximative set of next instruction addresses has been explored (and thus an indirection has been resolved) or \emph{no} successors are explored and the instruction address is marked as unresolved.
During exploration, indirect jumps can often easily be resolved as this simply requires reading jump tables.
However, not all indirections can be resolved.
Let $\mathsf{gt}$ denote the -- unknown -- ground truth that maps instruction addresses to sets of next instruction addresses.

\begin{definition}
Mapping $\phi$ is \emph{complete modulo unresolved indirections}, if and only if, for any resolved instruction address $a$ the distilled $\mathsf{next}$ function satisfies $\mathsf{gt}(a) \subseteq \mathsf{next}(a)$.
\end{definition}

We argue that the algorithm is complete in this sense.
The algorithm stops exploration whenever state $\sigma_\text{stored}$ is more (or equally) abstract then state $\sigma_\text{curr}$, i.e., when a join does not modify the stored state:
\[
 \sigma_\text{stored} = \sigma_\text{stored} \sqcup \sigma_\text{curr} 
\]
Thus, exploration stops only when either the current state has no outgoing edges, or when the state was previously visited with a state that was more or equally abstract than the current one.
In the second case, during the previous visit more or equal outgoing edges were explored.
Thus, for all visited states, all outgoing edges have been explored.

\textbf{Termination.}
Since loops are take care of through joining of states, non-termination can occur only when there exists an infinite chain of different states such that $\sigma_0 \sqcup \sigma_1 \sqcup \ldots$.
Such a chain is easily possible, since an infinite chain can be established for joins of ballpark pointers.
We thus ensure termination by putting bounds on the sizes of the sets of the ballpark pointers.
These bounds are chosen manually.
They do not affect correctness.
Making them larger makes the ballpark pointers more precise, but increases running times.
For lattice $\mathcal{C}$ the bound is ten elements.
As soon as a single instruction writes to a ballpark pointer with more than ten concrete expressions, the pointer is shifted to either lattice $\mathcal{B}$ or $\mathcal{C}$.
For lattice $\mathcal{B}$ the bound is five elements.
A single pointer will typically have only one base: pointer addresses are not added to each other.
A pointer with more than five bases is shifted to lattice $\mathcal{C}$.
Still, joining different paths may result in a memory write to a ballpark pointer with different bases.
For lattice $\mathcal{S}$ the bound is 250 elements.
Any pointer with more then 250 elements is shifted to $\top$.
}

\section{Use Cases}\label{sec:usecases}

We here discuss several use cases with small pedagogical examples.

\textbf{Integration into Disassembly:}
A fundamental problem in disassembly is resolving indirections.
Typically, indirect jumps can be analyzed through \emph{intra}procedural analysis. For example, they are the result of reading a jump table induced by a \texttt{switch} statement.
For indirect calls, however, often \emph{inter}procedural analysis is necessary.
An indirect call is often the result of a function callback.
These typically happen across function boundaries: a function pointer is passed from function to function until it is called.
This scenario mandates context-sensitive interprocedural analysis where the context provides sufficient information to resolve indirect calls.


\begin{figure}[htb]
\begin{tabular}{c}
  \begin{subfigure}[b]{\linewidth}
    \begin{lstlisting}[language={[x86masm]Assembler}, escapeinside=||, columns=fixed]
|Function $f$:|
|\hexa{0x6000:}|  mov qword ptr [0x2010], 0x6050
|\hexa{0x6001:}|  call 0x6500
|\hexa{0x6002:}|  call exit
|\ldots|
|\hexa{0x6050:}|  push rbp
|\hexa{0x6051:}|  |\ldots|
    \end{lstlisting}
  \end{subfigure}
 \\ 
  \begin{subfigure}[b]{\linewidth}
    \begin{lstlisting}[language={[x86masm]Assembler}, escapeinside=||, columns=fixed]
|\ldots|
|Function $g$:|
|\hexa{0x6500:}|  mov rax, qword ptr [0x2010]
|\hexa{0x6501:}|  call rax
|\hexa{0x6502:}|  ret
    \end{lstlisting}
  \end{subfigure}
\end{tabular}
\caption{Example of Indirect Call}
\label{fig:example_indirect_call}
\end{figure}

Figure~\ref{fig:example_indirect_call} provides an example.
Function pointer \hexa{0x6050} is written to a global variable (at address \hexa{0x2010}).
Function $h$ will be analyzed in a context where the pre-state provides:
\[
	\memreadS{\sigma_\text{pre}}{\ptrdomain{B}{\{\mathsf{Global}~\hexa{0x2010}\}}} = \ptrdomain{C}{\{\hexa{0x6050}\}}
\]
When symbolically executing the indirect call at Line~\hexa{0x6501}, it is overapproximatively known that any pointer stored in the global address space based on address \hexa{0x2010} points to instruction address \hexa{0x6050}.
The recursive traversal can therefore consider \hexa{0x6050} as a reachable instruction address.
Note that in a stripped binary, functions are not delineated: it is not known what addresses are function entries.
Without context, the indirect call at Line~\hexa{0x6501} cannot be resolved and the entire function at entry \hexa{0x6050} would have been missed.
\added{Section~\ref{subsec:perinst} provides data on how many indirections can be resolved in practice.}

\textbf{Preliminary to Decompilation:}
At the assembly level, there are no variables.
Bottom-up analysis, such as decompilation, has no ground truth as to what regions in memory constituted variables.
A variable in source code typically is compiled to a memory region, with the characteristic that this region is separate from any memory write not to the same variable.
As example consider Figure~\ref{fig:decompilation}.
It shows assembly with some hypothetical pointer analysis result on the right.
Based on the derived pointers, a decompilation tool can assign a variable $\mathsf{x}$ to lines \hexa{0x7000} and \hexa{0x7002}.
\begin{figure*}[htb]
\centering
\begin{tabular}{c|c|c}
  \begin{subfigure}[b]{.45\linewidth}
    \begin{lstlisting}[language={[x86masm]Assembler}, escapeinside=||]
|\hexa{0x7000:}|  mov qword ptr [rdi], 42
|\hexa{0x7001:}|  mov qword ptr [rsi], 43
|\hexa{0x7002:}|  mov qword ptr [rbp-40], 44
    \end{lstlisting}
    \caption{x86-64 Assembly}
  \end{subfigure}
	&
  \begin{subfigure}[b]{.2\linewidth}
    \begin{tabular}[b]{lcl}
    $\mathbf{x}$                 &  $\coloneqq$ & 42\\
    *\reg{rsi}                   &  $\coloneqq$ & 43\\
    $\mathbf{x}$                 &  $\coloneqq$ & 44\\
    \end{tabular}
    \caption{Decompiled}
  \end{subfigure}
	&
  \begin{subfigure}[b]{.25\linewidth}
    \begin{tabular}[b]{lcl}
    \reg{rdi}                 &  $=$ & $\ptrdomain{C}{\{\regz{rsp} - 48\}}$\\
    \reg{rsi}                 &  $=$ & $\ptrdomain{B}{\{\mathsf{Alloc}~\mathit{id}\}}$\\
    \reg{rbp}                 &  $=$ & $\ptrdomain{C}{\{\regz{rsp} - 8\}}$\\
    \end{tabular}
    \caption{Abstract Pointers}
  \end{subfigure}
\end{tabular}
\caption{Assembly code, decompiled to code with variables based on abstract pointers.}
\label{fig:decompilation}
\end{figure*}

\nocompile{
\subsection{Finding Exploits}

A large class of exploits is based on overwriting the return address, stored at the top of the stackframe.
It is possible to extend the separation relation from Definition~\ref{def:separation} so that any overlap with the 8-byte region at the top of the stackframe is considered undesirable.
For any pointer $p$ not equal to $\ptrdomain{C}{\{\regz{rsp}\}}$, we consider desirable:
$
(\ptrdomain{C}{\{\regz{rsp}\}}, 8) \bowtie p
$.
Running analysis with this separation relation produces results even if a stack overflow is possible.
However, it allows to enumerate the writes that could possibly overlap with the stackframe: only \emph{if} these desirable separation relations indeed hold, is the analysis correct.

As example, consider the assembly code in Figure~\ref{fig:exploits_dataflow1}.
The example is a snippet based on assembly found in one of the challenges of ROP emporium\footnote{https://ropemporium.com}: a set of exploitable binaries.
Function \texttt{read} expects a pointer in register \reg{rsi} and writes \reg{edx} bytes to that pointer.
In the example, a pointer to the stackframe of the caller is provided.
If register \reg{edx} holds a value larger than \hexa{0x20}, a stack overflow occurs.

\begin{figure}[htb]
\begin{tabular}{c|c}
  \begin{subfigure}[b]{.47\linewidth}
    \begin{lstlisting}[language={[x86masm]Assembler}, escapeinside=||]
|\hexa{0x8000:}| mov  edi, 0x0
|\hexa{0x8001:}| lea  rsi, [rsp - 0x20]
|\hexa{0x8002:}| call read(edi,rsi,edx)
    \end{lstlisting}
    \caption{x86-64 Assembly}
    \label{fig:exploits_dataflow1}
  \end{subfigure}
	&
  \begin{subfigure}[b]{.47\linewidth}
    \begin{lstlisting}[language={[x86masm]Assembler}, escapeinside=||]
|\hexa{0x9000:}| call malloc
|\hexa{0x9001:}| mov qword ptr [0x2020], rax
|\ldots|
|\hexa{0x9025:}| call malloc
|\ldots|
|\hexa{0x9050:}| mov rax, 0
|\hexa{0x9051:}| ret
    \end{lstlisting}
    \caption{Ballpark Pointers}
    \label{fig:exploits_dataflow2}
  \end{subfigure}
\end{tabular}
\caption{Example assembly snippets (function parameters added for readability).}
\label{fig:exploits_dataflow}
\end{figure}

For sake of explanation, we consider the classification of function \texttt{read} to be $\mathsf{Unknown}$.
When symbolic execution arrives at Line~\hexa{0x8002}, it will be in a state where both registers \reg{edi} and \reg{edx} are~$\top$.
Register \reg{rsi} holds abstract pointer $\ptrdomain{C}{\{\regz{rsp} - \hexa{0x20}\}}$.
According to Definition~\ref{def:external}, the following write will occur:
$
\mathsf{write}(\ptrdomain{C}{\{\regz{rsp} - \hexa{0x20}\}}, \top)
$. 
The size of the write is unknown, but the above desirable separation relation will prevent the return address from being overwritten.
Instead, it will provide as output the assumption:
\begin{verbatim}
@0x8003: write to rsp_0 - 0x20 caused by function call "read" 
         was assumed not to overlap with [rsp_0,8]
\end{verbatim}
The exploit of the binary is a negation of this assumption.
}

\textbf{Bottom-up Dataflow Analysis:}
The post-state produced by analysis of a function provides overapproximative insight into what parts of the state are modified by a function.
One application of this, is that it can be used to verify whether a certain function abides by a calling convention.
The calling convention designates certain registers to be non-volatile, i.e., they must be preserved by a function.
This can be observed from the post-state directly. 
If the post-state provides that, for a register \reg{r}:
$
\sigma_\text{post}.\reg{r} = \ptrdomain{C}{\{\regz{r}\}}
$,
then this indicates that register \reg{r} has been properly preserved.
Many compilers use a push/pop pattern to achieve such preservation: a register value is initially pushed to the local stack, and popped just before return.
Calling convention adherence requires abstract pointer analysis to be sufficiently precise throughout symbolic execution that the local region into which a register value is pushed is not overwritten.
A push/pop pattern is not necessary though: our approach is transparent to the means of register preservation a function may use.

We provide an additional example demonstrating this overapproximative form of pointer analysis can be used for live variable analysis.
Consider a function returning with the state in Example~\ref{ex:state}.
The pointer allocated at line \hexa{0x3003} is stored in the return register $\reg{rax}$ and can therefore considered to be live.
However, if the state overapproximately indicates that the pointer is not returned and not written to global memory, the pointer can be considered as ``not live'' after return of the function.

\added{
\textbf{Finding suspect patterns for automated exploit generation:}
Pointer analysis can enhance real-world downstream analyses.
As an example, we consider automated exploit generation: automatically finding bugs and generating working exploits~\cite{avgerinos2014automatic,you2017semfuzz}.
One of the many challenges in this field, is to deal with state space explosion.
Our pointer analysis can be used to find patterns in the binary that may lead to vulnerabilities, thereby pruning the state space to be explored.
}

\added{
As a concrete case study, we can enumerate all instructions in a binary that do a function call to an external function, \emph{and} that pass a pointer to the current stackframe as parameter to that function.
This is a suspect pattern, as the external function has been given the opportunity to overwrite the return address.
We have applied our pointer analysis to the \texttt{ret2win} challenge provide by ROP emporium}\footnote{\url{https://ropemporium.com}}.
\added{
The above heuristic finds an instruction at address \hexa{0x400701}: \texttt{\textcolor{blue}{call memset}} where register \texttt{\textcolor{blue}{rdi}} (the first parameter under the System V ABI calling convention) contains a local pointer. Indeed, the example is (purposefully) exploitable, and the exploit leverages exactly this particular instruction.}


\section{Evaluation}\label{sec:evaluation}

In addition to the formal proofs of the soundness of our approach, we provide a prototype implementation and conducted a series of experimental studies to evaluate \textit{soundness} and \textit{preciseness}. 
We run 1.) a \emph{comparative} evaluation against the state-of-the-art, and 2.) a more in-detail evaluation \emph{per instantiation}.

\subsection{Comparative Evaluation}

The closest related works to our binary pointer analyzer are BinPointer~\cite{kim2022binpointer} and BPA~\cite{kim2021refining}. Unfortunately, even after contacting the authors of the papers, either their source code was not available or the code was not runnable on our examples. Therefore, to stay fair in our comparison, we used the exact same dataset of binaries that BinPointer and BPA used (SPEC\_2006\_V1) and we employed the same definition of soundness and precision they used in their paper.

We map all abstract pointers to a subset of $\{ L, G, H \}$ (for: local, global, heap).
Domains that are provably within the current stackframe are mapped to $\{L\}$, domains that are relative to the stackpointer but that may possibly be above the current stack frame (e.g., point to the stackframe of a caller) are mapped to $\{L,H\}$.
Global bases are mapped to $\{G\}$.
Top is mapped to $\{ L, G, H \}$, the rest is considered heap and mapped to $\{H\}$.
This produces function $\mathsf{PA}$ that maps instruction addresses to observations.

To assess the ground truth, we built an instrumentation tool based on the dynamic binary instrumentation tool PIN~\cite{luk2005pin}.
Each memory write observed at run-time to some address $a$ is mapped to an element of $\{ L, G, H \}$.
If $\regz{rsp} - 0x10000 \leq a \leq \regz{rsp}$, address $a$ is considered local (here $\regz{rsp}$ denotes the value of the stackpointer when the current function was called).
We cannot know how large the stackframe is, and overapproximate the size with the constant $0x10000$ for sake of observation.
If address $a$ is a memory address covered by any of the sections where the binary is located at run-time, it is considered global.
In all other cases, address $a$ is considered heap.
By running multiple executions, each memory write is mapped to a subset of $\{ L, G, H \}$ as well.
This produces function $\mathsf{GT}$ that maps instruction addresses to observations.
The total set of (instruction addresses of) observed memory writes is denoted with $\mathsf{W}$.

\textit{Soundness:} We consider a static pointer analysis to be sound if its results support the ground truth, i.e., the observations of PIN. 
In other words, the pointer analysis must predict a superset of the ground truth.
Soundness is computed using the \emph{recall}: the percentage of supported memory writes to the total number.
If the recall is $100\%$, the pointer analysis is sound. 
    
\textit{Precision:} Soundness does not imply usefulness. For example, if the pointer analysis returns $\top$ for all the instructions, it will be consider sound but useless. We therefore measure \emph{precision} as well. In words, precision measures the specificity of the returned pointer domains. 
The precision is computed as the total average of the percentage of domains that $\mathsf{PA}$ overapproximated but were not observed by $\mathsf{GT}$.
\[
\begin{array}{lcl@{\hspace{5ex}}@{\hspace{5ex}}lcl}
\mathsf{recall} &\equiv& 100 * \frac{|\{ a \in \mathsf{W} \cdot \mathsf{GT}(a) \subseteq \mathsf{PA}(a)\}|}{|W|} 
& \\
\mathsf{precision} & \equiv& 100 * \mathsf{avg}_{a \in \mathsf{W}}(1 - \frac{|\mathsf{PA}(a) \setminus \mathsf{GT(a)}|}{3})
\end{array}
\]

Table~\ref{tbl:spec_experiment} shows results.
The authors of BinPointer have reported that the recall of their approach on the SPEC dataset is 100\%. Since we also observed the same recall for our approach on this dataset, we do not include it in the table.
As the results show, when it comes to heap and global memory accesses, our approach achieves over a $29\%$ and $15\%$ higher precision compared to BinPointer and BPA. For the local memory accesses, our approach shows almost similar precision compared to that of BinPointer.   
All other larger case studies reported in~\cite{kim2022binpointer} are not publicly available. 
Interpreting and comparing their published results subjectively, we conclude that our work achieves at least similar precision.
The largest reported binary for BinPointer is 161K instructions.
In the next section, we show that our tool scales upto 507K instructions, and we thus argue we are at least as scalable as well.

\begin{table*}[tbh!]
    \centering
    \begin{tabular}{ccccccccccc}
    \hline
        \multirow{2}{*}{Binary} & \multirow{2}{*}{\#instructions} & \multicolumn{3}{c}{Local} & \multicolumn{3}{c}{Global} & \multicolumn{3}{c}{Heap}  \\
        \cmidrule(r){3-5}\cmidrule(lr){6-8}\cmidrule(l){9-11} 
        & & BPA & BinPointer & This paper & BPA & BinPointer & This paper & BPA & BinPointer & This paper \\
        \hline
        mcf & 2.4K & 26.3 & 100.0 & 100.0  & 27.0 & 85.7 & 91.5  & N/A & N/A & 57.8\\
        
        lbm & 2.2K & 22.3 & 99.5 & 100.0  & 73.1 & 100.0 & 66.7  & N/A & N/A & 33.3\\
        
        libquantum & 9.6K & 47.9 & 100.0 &  100.0 & 100.0 & 100.0 & 100.0  & 6.9 & 6.9 & 50.0\\
        
        bzip2 & 11K & 16.9 & 93.2 & 89.2  & 51.7 & 51.7 &  100 & 3.6 & 21.8 & 40 \\
        
        sjeng & 22K & 32.7 & 97.5 & 79.1  & 55.1 & 55.6 & 99.8  & N/A & N/A & 56.1 \\
        
        milc & 23K & 49.4 & 99.4 &  93.3 & 81.2 & 88.9 & 95.2  & 23.7 & 23.7 & 51.1\\
        
        hmmer & 60K & 38.0 & 99.9 & 98.9  & 76.4 & 76.4 &  100.0 & 7.6 & 11.5 & 65.4 \\

       h264ref & 100K & 35.3 & 97.3 &  97.1 & 6.2 & 65.5 &  90.6 & 24.2 & 40.8 & 47.6 \\
        \hline
         Average &  & 33.6 & 98.4 & 94.7 & 58.8 & 78.0 & 93.0 & 13.2 & 20.9 & 50.2\\
       
    \end{tabular}
    \caption{Comparison study of the precision (\%) of BPA, BinPointer, and this paper on the SPEC dataset.}
    \label{tbl:spec_experiment}
\end{table*}

\nocompile{
\subsubsection{Benchmarks} 
\label{sec:eval_strategy_benchmark}

We first created a benchmark of binaries from different domains that are currently part of the Linux OS, and then a benchmark based on a dataset that two related works used in their experimental study. 

\textbf{Linux binaries:} The Linux binaries include file accesses, networking, databases, and a text editor with a UI. The binaries are stripped, highly optimized, and have not been compiled by us, but are taken as they are made available through standard distributions. As such, function entries are unknown: starting at the entry point, the only way a new internal function entry is found, is if an execution of \textsc{analyze} encounters a \texttt{call}.

\textbf{SPEC CPU 2k6:}
\nocompile{
Table~\ref{tbl:dataset_spec} provides some statistics and description of binaries inside each of the datasets. All of the binaries have been shared as a supplementary material to this paper.

\begin{table}[h]
    \centering
    \begin{tabular}{llcl}
         \hline\hline
         & Binary & \# of instrs & \multicolumn{1}{c}{Description}\\
         \hline
        \multirow{6}{*}{\STAB{\rotatebox[origin=c]{90}{Linux CoreUtils}}} & host & 12.2K & Looks up a variety of information about a host machine. \\
         & nslookup & 13K & Provides diagnostic information regarding DNS infrastructure. \\
         & sqlite3 & 27K & Executes SQLite commands to create table, retrieve records, etc. \\
         & zip & 26.5K & Facilitates creating and managing compressed archives. \\
         & ssh & 112.8K & Enables secure encrypted connection between hosts. \\
         & vim & 500.7K & A vastly used editor that can create and manage text files. \\
          \hline
       \multirow{6}{*}{\STAB{\rotatebox[origin=c]{90}{SPEC CPU 2k6}}}  & mcf & 2.4K & Single-depot vehicle scheduling in public mass transportation. \\
         & lbm & 2.2K & Implements the LB Method to simulate incompressible fluids in 3D. \\
         & libquantum & 9.6K & A library for the simulation of a quantum computer. \\
         & bzip2 & 11K & An in-memory compression and decompression tool. \\
         & sjeng & 22K & Plays several variant of chess based on different algorithms. \\
         & hmmer & 60K & Implements statistical models  used in computational biology. \\
         \hline\hline
    \end{tabular}
    \caption{Specification of the Linux dataset used in our experimental study and the SPEC CPU 2006 (V1) dataset used for comparison analysis against BinPointer and BPA approaches.}
    \label{tbl:dataset_spec}
\end{table}
}

The experiments were carried out on an Apple M1 Pro with 32 GB of memory.
All results were obtained without any user interaction.


\subsection{Results}
Table~\ref{tbl:linux_experiment} presents results. To provide more details, the table shows the recall and precision for local (\textbf{L}), global (\textbf{G}), and heap (\textbf{H}), separately. Moreover, the weighted average of recall (\textbf{W.R}) and precision (\textbf{W.P}) provide a more coarse-grained overview of the results.

As can be seen, except for two binaries (i.e., sqlite3 and vim), the recall for the Linux binary dataset is $100\%$. Based on our thorough investigation, there were instructions reported by PIN where $\mathsf{PA}$ was not able to identify them due to indirections, i.e., there were parts of the binary not identified as reachable by $\mathsf{PA}$. In these cases, since all the memory accesses that actually reported by $\mathsf{PA}$ support the ground truth, the $\mathsf{PA}$ remains sound for the reported memory accesses.
Precision generally is high for local and global memory accesses, but lower for heap.

To the best of our knowledge, no binary pointer analysis tool is publicly available, even given the large amount of research in the field.
Such a tool simply should assign some form of information to instructions doing a memory access.
However, to the best of our knowledge, no such tool is currently publicly available for a head-to-head comparison.
As an example, major binary-analysis tools such as IDA-PRO, McSema and BinaryNinja are not freely available.
More importantly, even \emph{if} they internally perform pointer analysis, they do not expose the results to end-users.
That applies to Ghidra as well.
There exists a VSA plugin for Ghidra, but it has not been maintained for about four years and we could not get it to build. Moreover, it is extremely coarse, quickly producing $\top$ for registers and memory values.
Although it was first implemented in Codesurfer/x86~\cite{balakrishnan2004analyzing}, we have not been able to find that tool even after attempting to approach the authors.
The angr toolsuite~\cite{shoshitaishvili16} is available.
Nonetheless, as also confirmed by the authors of BinPointer~\cite{kim2022binpointer}, it does not scale and works only on micro-benchmarks.
We would like to stress that angr is not intended to serve solely as a pointer analysis tool to begin with; its use cases are binary hardening and exploit generation.

Although BinPointer is not publicly available as confirmed through communication with the authors, the paper thoroughly describes the benchmark the authors used (i.e., SPEC) to evaluate the performance of their approach, as well as the way they measured the recall and precision of their approach. As a result, we were able to follow the exact same steps to run our approach on the same benchmark and measure the precision and recall of our approach the same way BinPointer was evaluated based on. Moreover, the authors of BinPointer have provided the results of evaluation of another pointer analysis approach developed by them, BPA~\cite{kim2021refining}, based on the same benchmark and steps. This made us capable of conducting a comparison study with BinPointer and BPA without accessing the source code of those approaches. Table~\ref{tbl:spec_experiment} shows the results of this experimentation.

The authors of BinPointer have reported that the recall of their approach on the SPEC dataset is 100\%. Since we also observed the same recall for our approach on this dataset, we do not include it in the table.
As the results show, when it comes to heap and global memory accesses, our approach achieves over a $29\%$ and $15\%$ higher precision compared to BinPointer and BPA. However, for the local memory accesses, our approach shows almost the similar precision compared to that of BinPointer. 


}

\subsection{Per-instantiation Evaluation}\label{subsec:perinst}

We also evaluate per instantiation soundness and precision.
As case studies, we consider a set of binaries covering over 1M instructions (see Table~\ref{tbl:evaluation3}).
Overall recall is 100\%.

We discuss precision in more detail (see Figure~\ref{fig:comparison2}).
Instantiation $\mathcal{C}$ is always 100\% precise -- as no abstraction is applied -- unless it assigns $\top$ to a pointer.
It may assign, e.g., $\ptrdomain{C}{\{\regz{rsp} - 8, \hexa{0x3000}\}}$, in which case on different paths the pointer can be local or global (0.79\% of the overall memory writes).
Overall, instantiation $\mathcal{C}$ assigns a non-$\top$ pointer to 85.22\% of all memory writes.

\begin{figure*}[htb]
\centering
\begin{tikzpicture}[scale=0.85,transform shape,venncircle/.style={draw, circle, minimum size=3cm,align=center}] 
    \node[venncircle] at (0, 0) (circle1) {};    
    \node[venncircle] at (-1cm, -1.75cm) (circle2) {};    
    \node[venncircle] at (1cm, -1.75cm) (circle3) {};
    \node[draw,rectangle,minimum size=5.75cm] at (0cm,-1cm) {};

    \node at (0,1.25cm) {\textbf{L}};
    \node at (-1cm,-3cm) {\textbf{G}};
    \node at (1cm,-3cm) {\textbf{H}};

    \node at (-1.75cm,-2.45cm) (G) {14.14};
    \node at (0cm,-2cm) (GH) {0.02};
    \node at (1.75cm,-2.45cm) (H) {26.38};
    \node at (0cm,0.75cm) (L) {43.83};
    \node at (-1cm,-0.55cm) (LG) {0.79};
    \node at (0cm,-1.2cm) (LGH) {0.06};

    \node at (0cm,-3.55cm) (Top) {$\mathbf{\top}:~14.78$};
    \node at (0,2.25cm) {$\mathcal{C}$};

    \node[venncircle] at (6, 0) (circle1) {};    
    \node[venncircle] at (5cm, -1.75cm) (circle2) {};    
    \node[venncircle] at (7cm, -1.75cm) (circle3) {};
    \node[draw,rectangle,minimum size=5.75cm] at (6cm,-1cm) {};

    \node at (6,1.25cm) {\textbf{L}};
    \node at (5cm,-3cm) {\textbf{G}};
    \node at (7cm,-3cm) {\textbf{H}};

    \node at (4.25cm,-2.45cm) (G) {14.45};
    \node at (6cm,-2cm) (GH) {0.02};
    \node at (7.75cm,-2.45cm) (H) {26.70};
    \node at (6cm,0.75cm) (L) {44.11};
    \node at (5cm,-0.55cm) (LG) {0.82};
    \node at (6cm,-1.2cm) (LGH) {0.06};

    \node at (6cm,-3.55cm) (Top) {$\mathbf{\top}:~13.83$};
    \node at (6cm,2.25cm) {$\mathcal{B}$};

    \node[venncircle] at (12cm, 0) (circle1) {};    
    \node[venncircle] at (11cm, -1.75cm) (circle2) {};    
    \node[venncircle] at (13cm, -1.75cm) (circle3) {};
    \node[draw,rectangle,minimum size=5.75cm] at (12cm,-1cm) {};

    \node at (12cm,1.25cm) {\textbf{L}};
    \node at (11cm,-3cm) {\textbf{G}};
    \node at (13cm,-3cm) {\textbf{H}};

    \node at (10.25cm,-2.45cm) (G) {14.45};
    \node at (12cm,-2cm) (GH) {0.33};
    \node at (13.75cm,-2.45cm) (H) {40.11};
    \node at (12cm,0.75cm) (L) {44.13};
    \node at (11cm,-0.55cm) (LG) {0.82};
    \node at (12cm,-1.2cm) (LGH) {0.07};
    \node at (13cm,-0.55cm) (LH) {0.07};

    \node at (12cm,-3.55cm) (Top) {$\mathbf{\top}:~0$};
    \node at (12cm,2.25cm) {$\mathcal{S}$};    
\end{tikzpicture} 
\caption{Precision results per instantiation. The numbers are percentages. For example, instantiation $\mathcal{C}$ assigned a local pointer to 43.83\% of all memory accesses, and to 0.79\% a pointer set with both a local and a global pointer.}
\label{fig:comparison2}

\end{figure*}
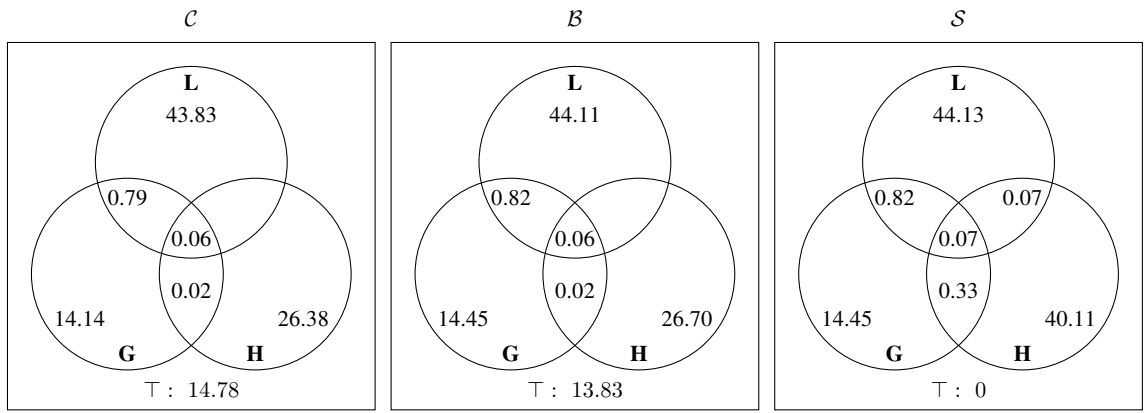

Instantiation $\mathcal{B}$, then, only marginally improves that number.
It assigns pointers to 0.95\% more memory writes, and for those only the base of the pointer is known.
However, the value of this instantiation fluctuates per binary: for some it did not improve on $\mathcal{C}$ at all, but for others up to 5\% more memory writes got assigned a non-$\top$ pointer.
These are typically the hard cases, e.g., a local pointer to an array for which only a base could be established.

Instantiation $\mathcal{S}$, finally, assigns a non-$\top$ pointer to all memory writes.
Figure~\ref{fig:comparison2} shows that this mostly concerns heap-pointers.
From this we can conclude that the cap on the number of sources is never hit, since otherwise this instantiation would assign $\top$.
For those writes where both $\mathcal{C}$ and $\mathcal{B}$ assign~$\top$, instantiation $\mathcal{S}$ at least provides us with information on which sources (e.g., function inputs or \texttt{malloc}s) were used to compute the pointer value.

A holistic way of looking at precision, is 1.) to see whether the pointer analysis is sufficiently precise such that for each function all its pointers are assigned a domain separate from the top of the stack frame (where the return address is stored), and 2.) that it enables resolving indirections.
Table~\ref{tbl:evaluation3} shows that for 94.2\% of all functions this was the case (\textbf{OK}).
For 5.41\% of the functions, not all indirections could be resolved (\textbf{UN}), but all resolved paths ending in a return were \textbf{OK}.
Finally, for 0.39\% of the functions, our pointer analysis was not sufficiently precise to show that the return address was not overwritten (\textbf{ERR}).
This typically happens for functions with complex stackpointer manipulation, such as stack probing or dynamic stack allocation.

\begin{table}[thb]
    \centering
    \begin{tabular}{lllllll}
         \hline\hline
         \textbf{Binary} & \textbf{\#instrs} & \textbf{Time (m:ss)} & \multicolumn{3}{c}{\textbf{\#functions}}\\
				 &&& \textbf{OK} & \textbf{UN} & \textbf{ERR} \\
         \hline
         \texttt{du}        & 30K  & 0:09   & 173  & 7   & 0 \\
         \texttt{gzip}      & 14K  & 0:04   & 101  & 5   & 0 \\
         \texttt{host}      & 12K  & 0:04   & 62   & 8   & 2 \\
         \texttt{sha512sum} & 10K  & 0:05   & 36   & 3   & 0 \\
         \texttt{sort}      & 18K  & 0:08   & 146  & 8   & 0 \\
         \texttt{spec/*}    & 150K & 1:22   & 942  & 41  & 11 \\
         \texttt{sqlite3}   & 319K & 1:33   & 1687 & 144 & 8 \\
         \texttt{ssh}       & 124K & 1:01   & 523  & 32  & 8 \\
         \texttt{tar}       & 91K  & 0:15   & 300  & 17  & 0 \\
         \texttt{vim}       & 507K & 4:26   & 2922 & 86  & 1 \\
         \texttt{wc}        & 6K   & 0:01   & 46   & 4   & 0 \\
         \texttt{wget2}     & 61K  & 0:10   & 578  & 81  & 2 \\
         \texttt{xxd}       & 2K   & 0:01   & 13   & 0   & 0 \\
         \texttt{zip}       & 24K  & 1:09   & 103  & 2   & 0 \\
         \hline
         \textbf{Total} & 1.4M & & 94.2\% & 5.41\% & 0.39\%\\
         \hline\hline
    \end{tabular}
    \caption{Evaluation on an Apple M1 Pro with 32 GB of memory. Memory usage was at most roughly 15GB for \texttt{vim}.}
    \label{tbl:evaluation3}
\end{table}



\section{Conclusion}\label{sec:conclusion}

This paper presented an approach to formally proven correct binary-level pointer analysis, that aims to assign a designation to each memory write in a binary.
The designations are provably overapproximative: the write provably cannot occur to any region in memory outside of its designation.
Evaluation confirms this soundness, and shows that precision is comparable to or improves upon the state-of-the-art.



Many existing approaches to binary analysis, whether it is disassembly, decompilation, binary patching or security analysis, are unsound.
State-of-the-art tools apply heuristics, incorporate best practices, and are generally based on extensive human-in-the-loop knowledge.
Decompilation becomes a form of art rather than an algorithm.
We envision an overapproximative --~provably sound~-- approach as an alternative.
This requires provably sound disassembly, control flow reconstruction, decompilation, and type inference, to begin with.
At the heart of all of these overapproximative techniques lies a proper understanding of the semantics of each individual instruction.
Binary-level pointer analysis aims to aid these future endeavors in overapproximative binary analysis, by indicating what the effect of each memory write in a binary can be.

\section*{Data-Availability Statement}

The implementation of the pointer analysis in this paper, all case studies, and the formal Isabelle/HOL proofs are available anonymously at:
\url{https://doi.org/10.5281/zenodo.14223108}.
\everymath{\color{black}}
\everydisplay{\color{black}}
\color{black}


\section*{Acknowledgments}
We would like to thank the anonymous reviewers for their insightful comments and suggestions, which helped to greatly improve the paper.
 
This work is supported by the Defense Advanced Research Projects Agency (DARPA) and Naval Information Warfare Center Pacific (NIWC Pacific) under Contract No. N66001-21-C-4028, by DARPA and the Army Contracting Command Aberdeen Proving Grounds (ACC-APG) under Prime Contract No. W912CG23C0024, and by DARPA under Prime Contract No. HR001124C0492.



\bibliographystyle{IEEEtranS}
\bibliography{IEEEabrv,ref}

\end{document}